% Quantum Integration and Anhomomorphic Logics
% by Stan Gudder

% begin 10.26.09
% finish 11.05.09
% correct 11.08.09
% .pdf to Petr, Sorkin, 11.08.09
% submit to arXiv - 11.08.09
% submit to 

% file "qial" from Stan Gudder
% prepared by Paula Gudder using LATeX
% on a Macintosh computer w/TeXShop
% contact us at sgudder@math.du.edu

% NOTES
% H4 is hardwired in because it  doesn't follow in list sequence
% page 4 - add \! to eliminate overfull
% Examples are hardwired w/ a \qedsymbol added at end
% Equation 3.4 - add \! to bring equation number to same line as equation
% use some \! in Theorem 3.1 to eliminate overfull
% use some \! in Theorem 5.1 to eliminate overfull
% use many \! in proof of Theorem 5.2 to eliminate overfull and again for nicer format
% use some \! in Theorem 5.6 to eliminate overfull
% cases are hardwired instead of using a macro
% at end of Example 10 add break instead of silly hyphenation

%Style section
\documentclass[12pt,letterpaper]{article}
\usepackage{amsmath,amsfonts,amsthm,amssymb}

% Declaration section
\theoremstyle{plain}

\numberwithin{equation}{section}
\newtheorem{thm}{Theorem}[section]
\newtheorem{lem}[thm]{Lemma}

% macro - need to add \hfill\qedsymbol & \medskip after marco - now these are hardwired
% also need to get counting working instead of hardwired
\newenvironment{exam}[1]{\medskip%   
  \setlength{\rightmargin}{\leftmargin}%
  {\noindent\textbf{Example #1}.\enspace}%
  }%

\newcounter{cond}

\newcommand{\real}{{\mathbb R}}
\newcommand{\integers}{{\mathbb Z}}
\newcommand{\ascript}{{\mathcal A}}

\newcommand{\nuhat}{\widehat{\nu}}
\newcommand{\phihat}{\widehat{\phi}}
\newcommand{\muhat}{\widehat{\mu}}
\newcommand{\fhat}{\widehat{f}}

\newcommand{\cupdot}{\mathbin{\cup{\hskip-5.4pt}^\centerdot}\,}
\newcommand{\bigcupdotionen}{{\bigcup _{i=1}^n}{\hskip-8pt}^\centerdot{\hskip 8pt}}
\newcommand{\bigcupdotitwon}{{\bigcup _{i=2}^n}{\hskip-8pt}^\centerdot{\hskip 8pt}}
\newcommand{\bigcupdotijn}{{\bigcup _{i=j}^n}{\hskip-8pt}^\centerdot{\hskip 8pt}}
\newcommand{\bigcupdotikn}{{\bigcup _{i=k}^n}{\hskip-8pt}^\centerdot{\hskip 8pt}}
\newcommand{\bigcupdotijonen}{{\bigcup _{i=j+1}^n}{\hskip-13pt}^\centerdot{\hskip 13pt}}
\newcommand{\bigcupdotimn}{{\bigcup _{i=m}^n}{\hskip-10pt}^\centerdot{\hskip 10pt}}
\newcommand{\subcupdot}{\mathbin{\cup{\hskip-4pt}^\centerdot}\,}

\newcommand{\ab}[1]{\left|#1\right|}
\newcommand{\brac}[1]{\left\{#1\right\}}
\newcommand{\paren}[1]{\left(#1\right)}
\newcommand{\sqbrac}[1]{\left[#1\right]}

\errorcontextlines=0

\begin{document}

\title{QUANTUM INTEGRALS AND\\ANHOMOMORPHIC LOGICS}
\author{Stan Gudder\\ Department of Mathematics\\
University of Denver\\ Denver, Colorado 80208\\
sgudder@math.du.edu}
\date{}
\maketitle

\begin{abstract}
The full anhomomorphic logic of coevents $\ascript ^*$ is introduced. Atoms of $\ascript ^*$ and embeddings of the event set $\ascript$ into $\ascript ^*$ are discussed. The quantum integral over an event $A$ with respect to a coevent $\phi$ is defined and its properties are treated. Integrals with respect to various coevents are computed. Reality filters such as preclusivity and regularity of coevents are considered. A quantum measure $\mu$ that can be represented as a quantum integral with respect to a coevent $\phi$ is said to 1-generate $\phi$. This gives a stronger reality filter that may produce a unique coevent called the ``actual reality'' for a physical system. What we believe to be a more general filter is defined in terms of a double quantum integral and is called 2-generation. It is shown that ordinary measures do not 1 or 2-generate coevents except in a few simple cases. Examples are given which show that there are quantum measures that 2-generate but do not 1-generate coevents. Examples also show that there are coevents that are
2-generated but not 1-generated. For simplicity only finite systems are considered.
\end{abstract}

\section{Introduction}  % Section 1
Quantum measure theory and anhomomorphic logics have been studied for the past 16 years
\cite{gt09, gud1, gud2, gud4, sal02, sor94, sor07, sor09, sw08}. The main motivations for these studies have been investigations into the histories approach to quantum mechanics and quantum gravity and cosmology. The author has recently introduced a quantum integration theory \cite{gud3} and the present article presents some connections between this theory in a slightly different setting and anhomomorphic logics. It turns out that classical logic, in which truth functions are given by homomorphisms, is not adequate for quantum mechanical studies. instead, one must employ truth functions that are not homomorphism and this is the origin of the term anhomomorphic logic. These more general truth functions are Boolean-valued functions on the set $\ascript$ of quantum events (or propositions) and are called coevents. We denote the set of coevents with reasonable properties by $\ascript ^*$ and call $\ascript ^*$ the full anhomomorphic logic. The elements of $\ascript ^*$ correspond to potential realities for a quantum system. For
$A\in\ascript$, $\phi\in\ascript ^*$, $\phi (A)=1$ if and only if the event $A$ occurs (or the proposition $A$ is true) in the reality described by $\phi$. The main goal of the theory is to find the ``actual reality'' $\phi _a$ for the system.

Even for a system with a small finite sample space $\Omega$ of cardinality $n$, the cardinality of $\ascript ^*$ is
$2^{(2^n-1)}$ which is huge. Thus, finding $\phi _a$ in this huge space may not be an easy task. It is thus important to develop filters or criteria for reducing this number of potential realities to make the selection of $\phi _a$ more manageable. One of the main filters used in the past has been preclusivity. Nature has provided us with an underlying quantum measure $\mu$ on $\ascript$. The measure $\mu (A)$ is sometimes interpreted as the propensity of occurrence for the event $A$. If $\mu (A)=0$, then $A$ does not occur and we say that $A$ is \textit{precluded}. If
$\phi (A)=0$ for every precluded event $A$, then $A$ is \textit{preclusive}. It is postulated that $\phi _a$ must be preclusive \cite{gt09, sor94, sor07}. Unfortunately, there may still be many preclusive coevents so this criteria does not specify $\phi _a$ uniquely. In this article we propose a stronger filter that may uniquely determine $\phi _a$.

Motivated by previous work on quantum integration \cite{gud3}, we define an integral $\int _Afd\phi$ over
$A\in\ascript$ of a real-valued function on $\Omega$ with respect to a coevent $\phi$. If $\mu$ is a quantum measure on $\ascript$ and if $\mu (A)=\int _Afd\phi$ for all $A\in\ascript$ where $f$ is a strictly positive function, we say that
$\mu$ 1-\textit{generates} $\phi$. If $\mu$ 1-generates $\phi$, then $\phi$ is automatically preclusive (relative to
$\mu$). We do not know whether a 1-generated coevent $\phi$ is unique but we have a partial result in that direction. In any case, 1-generated coevents provide a much stronger filter than preclusivity. Unfortunately, as we discuss in more detail later, one cannot expect an arbitrary quantum measure $\mu$ to 1-generate a coevent and we shall show that this only holds for a very restricted set of quantum measures. For this reason we introduce what we believer is a more general method that holds for a much larger set of (but not all) quantum measures. We say that $\mu$
2-\textit{generates} $\phi$ if
\begin{equation*}
\mu (A)=\int _A\sqbrac{\int _Af(\omega ,\omega ')d\phi (\omega )}d\phi (\omega ')
\end{equation*}
for all $A\in\ascript$ where $f$ is a symmetric, strictly positive function on $\Omega\times\Omega$. Again, if $\mu$
2-generates $\phi$, then $\phi$ is preclusive (relative to $\mu$). A result which we find interesting is that except for a few simple cases, no ordinary measure 1- or 2-generates a coevent. Thus, the concept of generating coevents is essentially purely quantum mechanical. This article includes many examples that illustrate various concepts. For simplicity, we only consider finite sample spaces.

We now briefly summarize the contents of the paper. In  Section~2, we define the full anhomomorphic logic
$\ascript ^*$. We briefly discuss the additive, multiplicative and quadratic sublogics of $\ascript ^*$. These sublogics have been considered in the past and it has not yet been settled which is the most suitable or whether some other sublogic is preferable. For this reason and for generality, we do not commit to a particular sublogic here. We point out that $\ascript ^*$ is a Boolean algebra and we discuss the atoms of $\ascript ^*$. Two embeddings of $\ascript$ into $\ascript ^*$ denoted by $A\mapsto A_*$ and $A\mapsto A^*$ are treated.

Section~3 introduces the quantum integral $\int fd\phi$ with respect to the coevent $\phi$. Properties of this integral and the more general integral $\int _Afd\phi$ are discussed. Integrals with respect to various coevents such as $A_*$ and $A^*$ are computed. Reality filters are considered in Section~4. Preclusive and regular coevents are discussed. Most of the section is devoted to the study of 1- and 2-generated coevents. Section~5 presents some general theorems. A uniqueness result shows that if $\phi ,\psi\in\ascript ^*$ are regular and $\mu$ 1-generates both $\phi$ and $\psi$, then $\phi =\psi$. Expansions of quantum measures and coevents are defined. It is shown that $\mu$
1- or 2-generates $\phi$ if and only if expansions of $\mu$ 1- or 2-generate corresponding expansions of $\phi$. It is shown that if $\phi$ is 1-generated by $\mu$ and $\phi (A)\ne 0$ whenever $\mu (A)=0$, then $\phi$ is 2-generated by $\mu$. Sections~6 and ~7 are devoted to examples of 1- and 2-generated coevents. For instance, it is shown that there are quantum measures that 2-generate coevents but do not 1-generate coevents. Examples of coevents that are 2-generated but not 1-generated are given. Also, examples of coevents that are not 1- or 2-generated are presented.

\section{Full Anhomomorphic Logics} % Section 2
Let $\Omega$ be a finite nonempty set with cardinality $\ab{\Omega}<\infty$. We call $\Omega$ a
\textit{sample space}. The elements of $\Omega$ correspond to outcomes or trajectories of an experiment or physical system and the collection of subsets $2^\Omega$ of $\Omega$ correspond to possible events. We can also think of the sets in $2^\Omega$ as propositions concerning the system. Contact with reality is given by a truth function
$\phi\colon 2^\Omega\to\brac{0,1}$. The function $\phi$ specifies what actually happens where we interpret
$\phi (A)=1$ to mean that $A$ is true or occurs and $\phi (A)=0$ means that $A$ is false or does not occur. It is convenient to view $\brac{0,1}$ as the two element Boolean algebra $\integers _2$ with the usual multiplication and addition given by $0\oplus 0=1\oplus 1=0$ and $0\oplus 1=1\oplus 0=1$.

For $\omega\in\Omega$ we define the \textit{evaluation map} $\omega ^*\colon 2^\Omega\to\integers _2$ by
\begin{equation*}
\omega (A)=\begin{cases}1&\text{if }\omega\in A\\0&\text{if }\omega\not\in A\end{cases}
\end{equation*}
For classical systems, it is assumed that a truth function $\phi$ is a homomorphism; that is, $\phi$ satisfies:
\begin{list} {(H\arabic{cond})}{\usecounter{cond}
\setlength{\rightmargin}{\leftmargin}}
%(H1)
\item $\phi (\Omega )=1$\ (unital)
%(H2)
\item $\phi (A\cupdot B)=\phi (A)\oplus\phi (B)$ whenever $A\cap B=\emptyset$\ (additive)
%(H3)
\item $\phi (A\cap B)=\phi (A)\phi (B)$\ (multiplicative)
\end{list}
In (H2) $A\cupdot B$ denotes $A\cup B$ whenever $A\cap B=\emptyset$. It is well-known that $\phi$ is a homomorphism if and only if $\phi =\omega ^*$ for some $\omega\in\Omega$. Thus, there are $\ab{\Omega}$ truth functions for classical systems.

As discussed in \cite{sor94, sor07, sor09}, for a quantum system a truth function need not be a homomorphism. However, a quantum truth function should satisfy some requirements or else there would be no theory at all. Various proposals have been presented concerning what these requirements should be \cite{gt09, sor94, sor07, sor09}. In \cite{sor94} it is assumed that quantum truth functions satisfy (H2) and these are called \textit{additive} truth functions, while in \cite{gt09, sor09} it is assumed that quantum truth functions satisfy (H3) and these are called
\textit{multiplicative} truth functions. In \cite{gud4} it is argued that quantum truth functions need not satisfy (H1), (H2) or (H3) but should be \textit{quadratic} or \textit{grade}-2 \textit{additive} in the sense that

\medskip
\noindent (H4)\enspace 
$\phi (A\cupdot B\cupdot C)=\phi (A\cupdot B)\oplus\phi (A\cupdot C)\oplus\phi (B\cupdot C)
\oplus\phi (A)\oplus\phi (B)\oplus\phi (C)$\bigskip

If $\phi ,\psi\colon 2^\Omega\!\to\integers _2$ are truth functions, we define $\phi\psi$ by
$(\phi\psi )(A)\!=\!\phi (A)\psi (A)$ and $\phi\oplus\psi$ by $(\phi\oplus\psi )(A)=\phi (A)\oplus\psi (A)$ for all
$A\in 2^\Omega$. We make the standing assumption that $\phi (\emptyset )=0$ for any truth function and we do admit the constant $0$ function as a truth function and denote it by $0$. We also define $1$ as the truth function given by $1(A)=1$ for all $A\ne\emptyset$. It can be shown \cite{gt09, gud4, sor09} that $\phi$ is additive if and only if $\phi$ is $0$ or a degree-1 polynomial
\begin{equation*}
\phi =\omega _1^*\oplus\cdots\oplus\omega _n^*
\end{equation*}
and that $\phi$ is multiplicative if and only if $\phi$ is a monomial
\begin{equation*}
\phi =\omega _1^*\omega _2^*\cdots\omega _n^*
\end{equation*}
Moreover, one can show \cite{gt09, gud4} that $\phi$ is quadratic if and only if $\phi$ is a degree-1 polynomial or
$\phi$ is a degree-2 polynomial of the form
\begin{equation*}
\phi =\omega _1^*\oplus\cdots\oplus\omega _n^*\oplus\omega _i^*\omega _j^*
  \oplus\cdots\oplus\omega _n^*\omega _s^*
\end{equation*}
Notice that we do not allow a constant term in our polynomials.

Since it is not completely clear what conditions a quantum truth function should satisfy and for the sake of generality, we shall not commit to any particular type of truth function here. We use the notation
$\Omega _n=\brac{\omega _1,\ldots ,\omega _n}$, $\ascript _n=2^{\Omega _n}$ for an $n$-element sample space. Leaving out the truth functions satisfying $\phi (\emptyset )=1$ we have $2^{(2^n-1)}$ admissible truth functions. We denote this set of truth functions by $\ascript _n^*$ or $\ascript ^*$ when no confusion arises. Similarly, we sometimes denote the set of events $\ascript _n$ by $\ascript$. We call $\ascript ^*$ the \textit{full anhomomorphic logic} and the elements of $\ascript ^*$ are called \textit{coevents}. In $\ascript ^*$ there are $n$ classical, $2^n$ additive, $2^n$ multiplicative and $2^{n(n+1)/2}$ quadratic coevents. All nonzero multiplicative coevents are unital and half of the additive and quadratic coevents are unital, namely those with an odd number of summands. It can be shown that any nonzero coevent $\phi$ can be uniquely written as a polynomial in the evaluation maps. We call this the
\textit{evaluation map representation} of $\phi$. Notice that $\ascript _1^*$ consists of the $0$ coevent and the single classical coevent $\omega _1^*$. The following examples discuss $\ascript _2^*$ and $\ascript _3^*$.

\begin{exam}{1}    % Example 1.
The full anhomomorphic logic $\ascript _2^*$ contain $2^3=8$ elements. We list them according to types. The $0$ coevent is type $(0)$ and the two classical coevents $\omega _1^*$, $\omega _2^*$ are type $(1)$. The additive coevent $\omega _1^*\oplus\omega _2^*$ is type $(1,2)$. The multiplicative coevent $\omega _1^*\omega _2^*$ is type $(12)$. The two quadratic coevents $\omega _1^*\oplus\omega _1^*\omega _2^*$,
$\omega _2^*\oplus\omega _1^*\omega _2^*$ are type $(1,12)$. Finally, the quadratic coevent
$1=\omega _1^*\oplus\omega _2^*\oplus\omega _1^*\omega _2$ is type $(1,2,12)$.\hfill\qedsymbol
\end{exam} % don't need skip because another example follows

\begin{exam}{2}    % Example 2.
The full anhomomorphic logic $\ascript _3^*$ contains $2^7=128$ elements. There are too many to list so we shall just consider some of them. The coevent $\omega _1^*$ is one of the three classical type (1) coevents and
$\omega _1^*\oplus\omega _2^*$ is one of the three type $(1,2)$ coevents. The coevent $\omega _1^*\omega _2^*$ is one of the three type $(12)$ coevents and $\omega _1^*\oplus\omega _2^*\oplus\omega _2^*$ is the only type $(1,2,3)$ coevent. There is one type $(123)$ coevent $\omega _1^*\omega _2^*\omega _3^*$ and there are six type $(1,12)$ coevents, one being $\omega _1^*\oplus\omega _1^*\omega _2^*$. There are three type $(1,23)$ coevents, one being $\omega _1^*\oplus\omega _2^*\omega _3^*$ and three type $(1,2,12)$ coevents, one being
$\omega _1^*\oplus\omega _2^*\oplus\omega _1^*\omega _2^*$. The coevent
$\omega _1^*\oplus\omega _2^*\oplus\omega _1^*\omega _3^*$ is one of the six type $(1,2,13)$ coevents and
$\omega _1^*\oplus\omega _2^*\oplus\omega _3^*\oplus\omega _1^*\omega _2^*$ is one of the three type
$(1,2,3,12)$ coevents. Our last example is the type $(1,2,3,12,13,23,123)$ coevent\medskip

$\hfill 1=\omega _1^*\oplus\omega _2^*\oplus\omega _3^*\oplus\omega _1^*\omega _2^*
  \oplus\omega _1^*\omega _3^*\oplus\omega _2^*\omega _3^*\oplus\omega _1^*\omega _2^*\omega _3^*
  \hfill\qedsymbol$
\end{exam}
\medskip

It is of interest to note that the coevent $1$ in $\ascript _n^*$ has the form
\begin{equation}         % equation (2.1)
\label{eq21}
1=\bigoplus _{i=1}^n\omega _i^*\oplus\bigoplus _{i<j=1}^n\omega _i^*\omega _j^*\oplus\cdots\oplus
  \omega _1^*\omega _2^*\cdots\omega _n^*
\end{equation}
To show this, let $A\in\ascript _n$ and assume without loss of generality that $A=\brac{\omega _1,\ldots ,\omega _m}$. Denoting the right side of \eqref{eq21} by $\phi$, the number of terms of $\phi$ that are $1$ on $A$ becomes
\begin{equation*}
\binom{m}{1}+\binom{m}{2}+\cdots +\binom{m}{m}=2^m-1
\end{equation*}
Since $2^m-1$ is odd, we conclude that $\phi (A)=1$.

For $\phi ,\psi\in\ascript ^*$ we define $\phi\le\psi$ if $\phi (A)\le\psi (A)$ for all $A\in\ascript$. It is clear that $\le$ is a partial order on $\ascript ^*$ and that $0\le\phi\le 1$ for all $\phi\in\ascript ^*$. For $\phi\in\ascript ^*$ we define the \textit{complement} $\phi '$ of $\phi$ by $\phi '=1\oplus\phi$. It is easy to check that $(\ascript ^*,0,1,\le ,{}')$ is a Boolean algebra in which the meet and join are given by
\begin{align*}
(\phi\wedge\psi )(A)&=\min\paren{\phi (A),\psi (A)}=\phi (A)\psi (A)\\
   (\phi\vee\psi )(A)&=\max\paren{\phi (A),\psi (A)}=\phi (A)\oplus\psi (A)\oplus\phi (A)\psi (A)
\end{align*}
Examples of meets and joins are:
\begin{align*}
&\omega _1^*\wedge\omega _2^*\wedge\cdots\wedge\omega _m^*=\omega _1^*\omega _2^*\cdots\omega _m^*\\
&\omega _1^*\vee\omega _2^*=\omega _1^*\oplus\omega _2^*\oplus\omega _1^*\omega _2^*\\
&(\omega _1^*\oplus\omega _1^*\omega _2^*)\vee (\omega _2^*\oplus\omega _1^*\omega _2^*)
  =\omega _1^*\oplus\omega _2^*\\
&(\omega _1^*\oplus\omega _1^*\omega _2^*\oplus\omega _1^*\omega _3^*\oplus
  \omega _1^*\omega _2^*\omega _3^*)
  \vee (\omega _2^*\oplus\omega _1^*\omega _2^*\oplus\omega _2^*\omega _3^*
  \oplus\omega _1^*\omega _2^*\omega _3^*)\\
&\quad =\omega _1^*\oplus\omega _2^*\oplus\omega _1^*\omega _3^*\oplus\omega _2^*\omega _3^*
\end{align*}
The \textit{atoms} of $\ascript ^*$ are the minimal nonzero elements of $\ascript ^*$ and every nonzero coevent $\phi$ is the unique join of the atoms below $\phi$. Thus, every atom has the form $\phi _A$ for $\emptyset\ne A\in\ascript$ where $\phi _A(B)=1$ if and only if $B=A$. The atoms of $\ascript _2^*$ are
$\phi _{\brac{\omega _1}}=\omega _1^*\oplus\omega _1^*\omega _2^*$,
$\phi _{\brac{\omega _2}}=\omega _2^*\oplus\omega _1^*\omega _2^*$, and
$\phi _\Omega =\omega _1^*\omega _2^*$. We then have
$\omega _1^*=\phi _{\brac{\omega _1}}\vee\phi _\Omega$,
$\omega _2^*=\phi _{\brac{\omega _2}}\vee\phi _\Omega$,
$\omega _1^*\oplus\omega _2^*=\phi _{\brac{\omega _1}}\vee\phi _{\brac{\omega _2}}$ and
$1=\phi _{\brac{\omega _1}}\vee\phi _{\brac{\omega _2}}\vee\phi _\Omega$. The seven atoms of $\ascript _3^*$ are the type $(123)$, the three type $(1,12,13,123)$ and the three type $(12,123)$ coevents.

We now describe the atoms of $\ascript _n^*$ in terms of their evaluation map representations. Without loss of generality suppose $A\in\ascript _n$ has the form $A=\brac{\omega _1,\ldots ,\omega _m}$ and let
$\phi=\omega _1^*\omega _2^*\cdots\omega _m^*$. We claim that
\begin{equation}         % equation (2.2)
\label{eq22}
\phi _A=\phi\oplus\bigoplus _{i=m+1}^n\phi\omega _i^*\oplus\bigoplus _{i<j=m+1}^n\phi\omega _i^*\omega _j^*
  \oplus\cdots\oplus\phi\omega _{m+1}^*\cdots\phi _n^*
\end{equation}
To show this, let $\psi$ be the right side of \eqref{eq22}. It is clear that $\psi (A)=\phi (A)=1$. Now suppose that
$B\in\ascript _n$ with $\emptyset\ne B\ne A$. If $A\not\subseteq B$, then $\psi (B)=0$. If $A\subseteq B$, then we can assume without loss of generality that
$B=\brac{\omega _1,\ldots ,\omega _m,\omega _{m+1},\ldots ,\omega _{m+r}}$. The number of terms on the right side of \eqref{eq22} that are $1$ on $B$ is
\begin{equation*}
1+\binom{r}{1}+\binom{r}{2}+\cdots +\binom{r}{r}=2^r
\end{equation*}
Since $2^r$ is even, we conclude that $\psi (B)=0$ and this proves \eqref{eq22}.

We next give a natural embedding of the Boolean algebra $\ascript$ into $\ascript ^*$. For $A\in\ascript$, we define
$A_*\in\ascript$ by $A_*(B)=1$ if and only if $\emptyset\ne B\subseteq A$. Notice that $\emptyset _*=0$ and
$\Omega _*=1$. Also, $\brac{\omega }_*=\phi _{\brac{\omega}}$ so $\brac{\omega}_*$ has the form \eqref{eq22} with $m=1$ and $\phi =\omega ^*=\omega _1^*$. The evaluation map representation of $A_*$ is a generalization of \eqref{eq22} that we shall discuss later. We denote the complement of a set $A$ by $A'$.

\begin{thm}       % Theorem 2.1
\label{thm21}
{\rm (a)}\enspace $A\subseteq B$ if and only if $A_*\le B_*$.
{\rm (b)}\enspace $(A\cap B)_*=A_*\wedge B_*$.
\end{thm}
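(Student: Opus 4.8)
The plan is to prove both parts by unwinding the definition $A_*(C)=1 \iff \emptyset\neq C\subseteq A$ and checking the relevant inequality or identity pointwise over events $C\in\ascript$, exploiting that all the coevents involved are $\integers_2$-valued. For part (a), the forward implication is immediate: assuming $A\subseteq B$, if $A_*(C)=1$ then $\emptyset\neq C\subseteq A\subseteq B$, so $B_*(C)=1$, and since both functions take values in $\brac{0,1}$ this yields $A_*(C)\le B_*(C)$ for every $C$, i.e. $A_*\le B_*$. For the converse I would evaluate at the single test event $C=A$. If $A=\emptyset$ there is nothing to prove; if $A\neq\emptyset$ then $A_*(A)=1$ because $\emptyset\neq A\subseteq A$, so the hypothesis $A_*\le B_*$ forces $B_*(A)=1$, that is $\emptyset\neq A\subseteq B$, whence $A\subseteq B$. (Testing instead at the singletons $C=\brac{\omega}$ for each $\omega\in A$ would work equally well.)

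For part (b) I would again argue pointwise, using the meet formula $(A_*\wedge B_*)(C)=A_*(C)B_*(C)$ recorded just above the theorem. Fix $C\in\ascript$. The product $A_*(C)B_*(C)$ equals $1$ precisely when $A_*(C)=1$ and $B_*(C)=1$, that is, when $\emptyset\neq C\subseteq A$ and $\emptyset\neq C\subseteq B$; since $C\subseteq A$ together with $C\subseteq B$ is equivalent to $C\subseteq A\cap B$, this holds exactly when $\emptyset\neq C\subseteq A\cap B$, which is the defining condition for $(A\cap B)_*(C)=1$. As all the functions take values in $\brac{0,1}$, agreement on where they equal $1$ gives $(A\cap B)_*=A_*\wedge B_*$.

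Both arguments are routine verifications from the definitions, so I expect no serious obstacle; the only point demanding care is the treatment of the empty event. The definition of $A_*$ builds in the requirement $C\neq\emptyset$, so I must ensure that the empty set never spuriously satisfies a membership condition. This is why I isolate the case $A=\emptyset$ in the converse of (a), and why the equivalence in (b) between $\emptyset\neq C\subseteq A\cap B$ and the conjunction of $\emptyset\neq C\subseteq A$ with $\emptyset\neq C\subseteq B$ remains valid even at $C=\emptyset$, where all three statements are simply false. With that bookkeeping in place, both identities follow directly.
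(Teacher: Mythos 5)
Your proposal is correct and follows essentially the same route as the paper: both parts are proved by pointwise verification over test events $C$ directly from the definition $A_*(C)=1\iff\emptyset\ne C\subseteq A$ and the meet formula $(A_*\wedge B_*)(C)=A_*(C)B_*(C)$. The only (harmless) variation is in the converse of (a), where you evaluate directly at $C=A$ to conclude $A\subseteq B$, while the paper argues by contradiction, evaluating at $C=A\cap B'$; your version is marginally more direct and your explicit handling of the $A=\emptyset$ case is careful bookkeeping the paper leaves implicit.
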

\begin{proof}
(a)\enspace Suppose that $A\subseteq B$. If $A_*(C)=1$, then $\emptyset\ne C\subseteq A$. Hence, $C\subseteq B$ so $B_*(C)=1$. It follows that $A_*\le B_*$. Conversely, suppose that $A_*\le B_*$. If $A\not\subseteq B$, then
$A\cap B'\ne\emptyset$ and $A\cap B'\subseteq A$. Hence, $A_*(A\cap B')=1$ so $B_*(A\cap B')=1$. We conclude that $A\cap B'\subseteq B$ which is a contradiction. Therefore, $A\subseteq B$.\enspace
(b)\enspace If $A_*\wedge B_*(C)=1$, then $\min\paren{A_*(C),B_*(C)}=1$. Hence, $A_*(C)=B_*(C)=1$ so
$C\subseteq A\cap B$. It follows that $(A\cap B)_*(C)=1$. Conversely, if $(A\cap B)_*(C)=1$, then
$C\subseteq A\cap B$. Hence, $A_*(C)=B_*(C)=1$ so $(A_*\wedge B_*)(C)=1$. Therefore,
$(A\cap B)_*=A_*\wedge B_*$.
\end{proof}

In general, $A\mapsto A_*$ is not a Boolean homomorphism. One reason is that $(A')_*\ne (A_*)'$. Indeed, if
$B\cap A\ne\emptyset$ and $B\cap A'\ne\emptyset$, then $(A')_*(B)=A_*(B)=0$. Another reason is that
$(A\cup B)_*\ne A_*\vee B_*$. To see this, suppose that $C\not\subseteq A$, $C\not\subseteq B$ and
$C\subseteq A\cup B$. Then $(A\cup B)_*(C)=1$ while $A_*(C)=B_*(C)=0$. To describe the evaluation map representation of $A_*$ we can assume without loss of generality that $A\subseteq\Omega _n$ and
$A=\brac{\omega _1,\ldots ,\omega _m}$, $m\le n$. We claim that $A_*$ is the sum of all monomials that contain at least one of the elements $\omega _i^*$, $i=1,\ldots ,m$ as a factor. For instance, in $\Omega _3$ we have
\begin{equation*}
\brac{\omega _1,\omega _2}_*=\omega _1^*\oplus\omega _2^*\oplus\omega _1^*\omega _2^*\oplus
  \omega _1^*\omega _3^*\oplus\omega _2^*\omega _3^*\oplus\omega _1^*\omega _2^*\omega _3^*
\end{equation*}
as is easy to check. Instead of giving the general proof of this claim, which is tedious, we present some other examples.

\begin{exam}{3}    % Example 3.
In $\Omega _4$ with $A=\brac{\omega _1,\omega _2}$ our claim states that
\begin{align}         % equation (2.3)
\label{eq23}
A_*&=\omega _1^*\oplus\omega _2^*\oplus\omega _1^*\omega _2^*\oplus\omega _1^*\omega _3^*
  \oplus\omega _1^*\omega _4^*\oplus\omega _2^*\omega _3^*\oplus\omega _2^*\omega _4^*\\
  &\quad\oplus\omega _1^*\omega _2^*\omega _3^*\oplus\omega _1^*\omega _2^*\omega _4^*
  \oplus\omega _1^*\omega _3^*\omega _4^*\oplus\omega _2^*\omega _3^*\omega _4^*
  \oplus\omega _1^*\omega _2^*\omega _3^*\omega _4^*\notag
\end{align}
Let $\psi$ be the right side of \eqref{eq23}. Clearly $\psi (B)=1$ for every $\emptyset\ne B\subseteq A$ and
$\psi (B)=0$ for every $B\subseteq A'$. Next we have
\begin{equation*}
\psi\paren{\brac{\omega _1,\omega _3}}=\psi\paren{\brac{\omega _1,\omega _4}}
  =\psi\paren{\brac{\omega _2,\omega _3}}=\psi\paren{\brac{\omega _2,\omega _4}}=0
\end{equation*}
Also, $\psi (\Omega _4)=0$ and
\begin{equation*}
\psi\paren{\brac{\omega _1,\omega _2,\omega _3}}=\psi\paren{\brac{\omega _1,\omega _2,\omega _4}}=0
\end{equation*}
It follows that $\psi =A_*$. As another example, let $B=\brac{\omega _1,\omega _2,\omega _3}$. Our claim states that
\begin{equation*}
B_*=A_*\oplus\omega _3^*\oplus\omega _3^*\omega _4^*
\end{equation*}
and this is easily verified.\hfill\qedsymbol
\end{exam}
\medskip

We now present another embedding of $\ascript$ into $\ascript ^*$. For $A=\brac{\omega _1,\ldots ,\omega _m}$, let $A^*$ be the sum of all monomials that contain only elements $\omega _1^*,\ldots ,\omega _n^*$ as factors. Thus,
\begin{equation}         % equation (2.4)
\label{eq24}
A^*=\bigoplus _{i=1}^m\omega _i^*\oplus\bigoplus _{i<j=1}^m\omega _i^*\omega _j^*\oplus\cdots\oplus
  \omega _1^*\omega _2^*\cdots\omega _m^*
\end{equation}
Notice that $\emptyset ^*=0$, $\Omega ^*=1$ and $\brac{\omega}^*=\omega ^*$.

\begin{thm}       % Theorem 2.2
\label{thm22}
{\rm (a)}\enspace $A^*(B)=1$ if and only if $B\cap A\ne\emptyset$.
{\rm (b)}\enspace $A\subseteq B$ if and only if $A^*\le B^*$.
{\rm (c)}\enspace $(A\cup B)^*=A^*\vee B^*$.
\end{thm}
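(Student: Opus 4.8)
The plan is to establish (a) first, by a parity count in $\integers_2$ of exactly the type already used to verify \eqref{eq21}, and then to read off (b) and (c) as direct corollaries of the characterization in (a).

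For (a) I would begin from the observation that a single monomial $\omega_{i_1}^*\cdots\omega_{i_k}^*$ evaluates to $1$ on $B$ precisely when $\brac{\omega_{i_1},\ldots,\omega_{i_k}}\subseteq B$. By \eqref{eq24}, $A^*$ is the $\integers_2$-sum of exactly one monomial for each nonempty subset of $A=\brac{\omega_1,\ldots,\omega_m}$, so $A^*(B)$ is the parity of the number of nonempty subsets $S\subseteq A$ satisfying $S\subseteq B$; these are precisely the nonempty subsets of $A\cap B$. Writing $k=\ab{A\cap B}$, this count equals $\binom{k}{1}+\binom{k}{2}+\cdots+\binom{k}{k}=2^k-1$. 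If $A\cap B=\emptyset$ then $k=0$ and the count is $0$, giving $A^*(B)=0$; if $A\cap B\ne\emptyset$ then $k\ge1$, the count $2^k-1$ is odd, and $A^*(B)=1$. This proves (a).

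Part (b) I would deduce immediately from (a). For the forward direction, if $A\subseteq B$ and $A^*(C)=1$, then $C\cap A\ne\emptyset$, and since $C\cap A\subseteq C\cap B$ we get $C\cap B\ne\emptyset$, hence $B^*(C)=1$; thus $A^*\le B^*$. For the converse I would argue by contraposition: if $A\not\subseteq B$, choose $\omega\in A\setminus B$ and test the singleton $C=\brac{\omega}$, for which $C\cap A\ne\emptyset$ but $C\cap B=\emptyset$, so $A^*(C)=1$ while $B^*(C)=0$, contradicting $A^*\le B^*$.

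For (c) I would compare the two sides on an arbitrary $C$ using the pointwise join formula $(\phi\vee\psi)(C)=\max\paren{\phi(C),\psi(C)}$. By (a), $(A\cup B)^*(C)=1$ iff $C\cap(A\cup B)\ne\emptyset$, whereas $(A^*\vee B^*)(C)=1$ iff $C\cap A\ne\emptyset$ or $C\cap B\ne\emptyset$; since $C\cap(A\cup B)=(C\cap A)\cup(C\cap B)$ these conditions are equivalent, so the two coevents agree everywhere. The only genuinely substantive step is the parity count in (a), and even that is a direct analogue of the computation already carried out for \eqref{eq21}; once (a) is in place the remaining parts are routine, the one point to remember being the use of singleton test events in the converse of (b).
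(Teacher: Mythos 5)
Your proof is correct and follows essentially the same route as the paper: the same parity count $\binom{r}{1}+\cdots+\binom{r}{r}=2^r-1$ over the monomials of $A^*$ establishes (a), and then (b) and (c) are read off from the characterization $A^*(B)=1\Leftrightarrow B\cap A\ne\emptyset$ exactly as the paper does. The only cosmetic difference is in the converse of (b), where you test with a singleton $\brac{\omega}$, $\omega\in A\smallsetminus B$, while the paper tests with the set $A\cap B'$; these are interchangeable instances of the same idea.
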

\begin{proof}
(a)\enspace If $B\cap A=\emptyset$, then clearly $A^*(B)=0$. Suppose $B\cap A\ne\emptyset$ and $\ab{B\cap A}=r$. Representing $A^*$ as in \eqref{eq24} we have
\begin{equation}         % equation (2.5)
\label{eq25}
A^*(B)=\bigoplus _{i=1}^m\omega _i^*(B)\oplus\bigoplus _{i<j=1}^m\omega _i^*\omega _j^*(B)\oplus\cdots\oplus
  \omega _1^*\omega _2^*\cdots\omega _m^*(B)
\end{equation}
The number of terms on the right side of \eqref{eq25} that equal $1$ are
\begin{equation*}
\binom{r}{1}+\binom{r}{2}+\cdots +\binom{r}{r}=2^r-1
\end{equation*}
Since $2^r-1$ is odd, $A^*(B)=1$.
(b)\enspace Suppose that $A\subseteq B$. If $A^*(C)=1$, then by (a), $C\cap A\ne\emptyset$. Hence,
$C\cap B\ne\emptyset$ so by (a) $B^*(C)=1$. Therefore, $A^*\le B^*$. Conversely, suppose that $A^*\le B^*$. If
$A\not\subseteq B$, then $A\cap B'\ne\emptyset$. Since $A\cap B'\subseteq A$, by (a) $A^*(A\cap B')=1$. Hence, $B^*(A\cap B')=1$ so by (a) $(A\cap B')\cap B\ne\emptyset$ which is a contradiction. Hence, $A\cap B'=\emptyset$ so $A\subseteq B$.
(c)\enspace If $(A\cup B)^*(C)=1$, then $C\cap (A\cup B)\ne\emptyset$. Hence, $C\cap A\ne\emptyset$ or
$C\cap B\ne\emptyset$ so $A^*(C)=1$ or $B^*(C)=1$. Hence,
\begin{equation*}
(A^*\vee B^*)(C)=\max\paren{A^*(C),B^*(C)}=1
\end{equation*}
Conversely, if $(A^*\vee B^*)(C)=1$ then $A^*(C)=1$ or $B^*(C)=1$. Hence, $C\cap A\ne\emptyset$ or
$C\cap B\ne\emptyset$. It follows that $C\cap (A\cup B)\ne\emptyset$ so $(A\cup B)^*(C)=1$. We conclude that
$(A\cup B)^*=A^*\vee B^*$.
\end{proof}

As before $A\mapsto A^*$ is not a Boolean homomorphism because $(A')^*\ne (A^*)'$ and
$(A\cap B)^*\ne A^*\wedge B^*$ in general. For the first case, let $B$ satisfy $B\cap A\ne\emptyset$ and
$B\cap A'\ne\emptyset$. Then
\begin{equation*}
A^*(B)=(A')^*(B)=1
\end{equation*}
Hence, $(A')^*\ne (A^*)'$. For the second case, let $C$ satisfy $C\cap A$, $C\cap B\ne\emptyset$ but
$C\cap (A\cap B)=\emptyset$. Then $(A\cap B)^*(C)=0$ but
\begin{equation*}
(A^*\wedge B^*)(C)=\min\paren{A^*(C),B^*(C)}=1
\end{equation*}
Hence, $(A\cap B)^*\ne A^*\wedge B^*$. It is interesting to note that $A_*\le A^*$.

We close this section by mentioning that the coevent $\psi _A(B)=1$ if and only if $A\subseteq B$ is much simpler than $A_*$ or $A^*$. If $A=\brac{\omega _1,\ldots ,\omega _m}$, then
$\psi _A=\omega _1^*\omega _2^*\cdots\omega _m^*$ which is multiplicative.

\section{Quantum Integrals} % Section 3
As in Section~2, $\Omega$ is a finite set, $\ascript = 2^\Omega$ and $\ascript ^*$ is the full anhomomorphic logic. Following \cite{gud3}, for $f\colon\Omega\to\real$ and $\phi\in\ascript ^*$, we define the \textit{quantum integral}
($q$-\textit{integral}, for short)
\begin{equation}         % equation (3.1)
\label{eq31}
\int fd\phi =\int _0^\infty\phi\paren{\brac{\omega\colon f(\omega )>\lambda}}d\lambda
  -\int _0^\infty\phi\paren{\brac{\omega\colon f(\omega )<-\lambda}}d\lambda
\end{equation}
where $d\lambda$ denotes Lebesgue measure on $\real$. Any $f\colon\Omega\to\real$ has a unique representation $f=f_1-f_2$ where $f_1,f_2\ge 0$ and $f_1f_2=0$. It follows from \eqref{eq31} that
\begin{equation*}
\int fd\phi =\int f_1d\phi -\int f_2d\phi
\end{equation*}
We conclude that $q$-integrals are determined by $q$-integrals of nonnegative functions. In fact, all the $q$-integrals that we consider will be for nonnegative functions.

Denoting the characteristic function of a set $A$ by $\chi _A$, any nonnegative function $f\colon\Omega\to\real ^+$ has the canonical representation
\begin{equation}         % equation (3.2)
\label{eq32}
f=\sum _{i=1}^n\alpha _i\chi _{A_i}
\end{equation}
where $0<\alpha _1<\cdots <\alpha _n$ and $A_i\cap A_j=\emptyset$, $i\ne j=1,\ldots ,n$. Thus, $\alpha _i$ are the nonzero values of $f$ and $A_i=f^{-1}(\alpha _i)$, $i=1,\ldots ,n$. Since $f\ge 0$ we can write \eqref{eq31} as
\begin{equation}         % equation (3.3)
\label{eq33}
\int fd\phi =\int _0^\infty\phi\paren{\brac{\omega\colon f(\omega )>\lambda}}d\lambda
\end{equation}
and it follows from \eqref{eq32} and \eqref{eq33} that
\begin{align}         % equation (3.4)
\label{eq34}
\int\!\!fd\phi&=\alpha _1\phi\!\paren{\bigcupdotionen A_i}\!\!+\!(\alpha _2-\alpha _1)\phi\!\paren{\bigcupdotitwon a_i}
  \!\!+\cdots +\!(\alpha _n-\alpha _{n-1})\phi (A_n)\\
\label{eq35}         % equation (3.5)
  &=\sum _{j=1}^n\alpha _j\sqbrac{\phi\paren{\bigcupdotijn A_i}-\phi\paren{\bigcupdotijonen A_i}}
\end{align}
In \eqref{eq35} we use the convention that $\cupdot _{i=n+1}^nA_i=\emptyset$.

It is clear from \eqref{eq34} that $\int fd\phi\ge 0$ and $\int\alpha\chi _Ad\phi =\alpha\phi (A)$ for all $\alpha\ge 0$. In particular, $\phi (A)=\int\chi _Ad\phi$ so the $q$-integral generalizes the coevent $\phi$. Also it is easy to check that
$\int\alpha fd\phi =\alpha\int fd\phi$ and that
\begin{equation*}
\int (\alpha +f)d\phi =\int\alpha d\phi +\int fd\phi =\alpha\phi (A)+\int fd\phi
\end{equation*}
for all $\alpha\ge 0$. The $q$-integral is nonlinear, in general. For example, suppose that $A,B\in\ascript$ are disjoint nonempty sets, $0<\alpha <\beta$ and $\phi\in\ascript ^*$ satisfies $\phi (A\cupdot B)\ne\phi (A)+\phi (B)$. Then by \eqref{eq35} we have
\begin{align*}
\int\paren{\alpha\chi _A+\beta\chi _B}d\phi&=\alpha\sqbrac{\phi (A\cupdot B)-\phi (B)}+\beta\phi (B)\\
  &\ne\alpha\phi (A)+\beta\phi (B)=\int\alpha\chi _Ad\phi +\int\beta\chi _Bd\phi
\end{align*}
Also, we do not have $\int fd(\phi\oplus\psi )=\int fd\phi +\int fd\psi$. For example, suppose that
$\omega _1,\omega _2\in A$. Then
\begin{align*}
\int\chi _Ad(\omega _1^*\oplus\omega _2^*)&=(\omega _1^*\oplus\omega _2^*)(A)
  =0\ne\omega _1^*(A)+\omega _2^*(A)\\
  &=\int\chi _Ad\omega _1^*+\int\chi _Ad\omega _2^*
\end{align*}

It is frequently convenient to write \eqref{eq34} and \eqref{eq35} in a different form. For $f\colon\Omega\to\real ^+$, let $\omega _1,\ldots ,\omega _r$ be the points of $\Omega$ such that $f(\omega _i)>0$, $i=1,\ldots ,r$, with
$f(\omega _1)\le\cdots\le f(\omega _r)$. It easily follows from \eqref{eq34} that
\begin{align}         % equation (3.6)
\label{eq36}
\int fd\phi&=f(\omega _1)\phi\paren{\brac{\omega _1,\ldots ,\omega _r}}+\sqbrac{f(\omega _2)-f(\omega _1)}
  \phi\paren{\brac{\omega _2,\ldots ,\omega _r}}\notag\\
  &\quad +\cdots +\sqbrac{f(\omega _r)-f(\omega _{r-1})}\phi (\omega _r)\\
\label{eq37}         % equation (3.7)
&=\sum _{i=1}^rf(\omega _i)\sqbrac{\phi\paren{\brac{\omega _i,\ldots ,\omega _r}}
  -\phi\paren{\brac{\omega _{i+1},\ldots ,\omega _r}}}
\end{align}
In \eqref{eq36} we have used the shorthand notation $\phi (\omega _r)=\phi\paren{\brac{\omega _r}}$.

We now compute $q$-integrals relative to some of the common coevents. These formulas follow easily from \eqref{eq34} or \eqref{eq36} and their verification is left to the reader. First, it is not surprising that
$\int fd\omega ^*=f(\omega )$. If $0\le f(\omega _1)\le\cdots\le f(\omega _r)$ then
\begin{equation}         % equation (3.8)
\label{eq38}
\int fd(\omega _1^*\oplus\cdots\oplus\omega _r^*)=f(\omega _r)-f(\omega _{r-1})+\cdots +(-1)^{r+1}f(\omega _1)
\end{equation}
For example, if $f\colon\Omega\to\real ^+$ is arbitrary then 
\begin{equation*}
\int fd(\omega _1^*\oplus\omega _2^*)=\max\paren{f(\omega _1),f(\omega _2)}
  -\min\paren{f(\omega _1),f(\omega _2)}
\end{equation*}
For $f\colon\Omega\to\real ^+$ we have
\begin{align}         % equation (3.9)
\label{eq39}
\int fd(\omega _1^*\cdots\omega _r^*)&=\min\paren{f(\omega _1),\ldots ,f(\omega _r)}\\
\label{eq310}         % equation (3.10)
\int fdA^*&=\max\brac{f(\omega )\colon\omega\in A}
\end{align}
For $f\colon\Omega\to\real ^+$, letting $\alpha =\max\brac{f(\omega )\colon\omega\in\Omega}$,
$\beta =\min\brac{f(\omega )\colon\omega\in A}$ we have
\begin{equation}         % equation (3.11)
\label{eq311}
\int fdA_*=\begin{cases}\alpha -\beta&\text{if }f^{-1}(\alpha )\subseteq A\\
0&\text{if }f^{-1}(\alpha )\not\subseteq A\end{cases}
\end{equation}

\begin{exam}{4}    % Example 4.
We illustrate \eqref{eq38}--\eqref{eq311} by computing these $q$-integrals for specific examples. On $\Omega _5$ suppose $0<f(\omega _1)<\cdots <f(\omega _5)$. Applying \eqref{eq36} gives
\begin{align*}
\int fd(\omega _2^*\oplus\omega _3^*\oplus\omega _4^*)&=f(\omega _1)+\sqbrac{f(\omega _2)-f(\omega _1)}
  +\sqbrac{f(\omega _3)-f(\omega _2)}\cdot 0\\
  &\quad +\sqbrac{f(\omega _4)-f(\omega _3)}+\sqbrac{f(\omega _5)-f(\omega _4)}\cdot 0\\
  &=f(\omega _4)-f(\omega _3)+f(\omega _2)
\end{align*}
\begin{align*}
\int fd(\omega _2^*\omega _3^*\omega _4^*)&=f(\omega _1)+\sqbrac{f(\omega _2)-f(\omega _1)}
  +\sqbrac{f(\omega _3)-f(\omega _2)}\cdot 0\\
  &\quad +\sqbrac{f(\omega _4)-f(\omega _3)}\cdot 0+\sqbrac{f(\omega _5)-f(\omega _4)}\cdot 0\\
  &=f(\omega _2)=\min\paren{f(\omega _2),f(\omega _3),f(\omega _4)}
\end{align*}
which are \eqref{eq38} and \eqref{eq39}, respectively. Letting $A=\brac{\omega _2,\omega _3,\omega _4}$ we have
\begin{align*}
\int fdA^*&=f(\omega _1)+\sqbrac{f(\omega _2)-f(\omega _1)}+\sqbrac{f(\omega _3)-f(\omega _2)}
  +\sqbrac{f(\omega _4)-f(\omega _3)}\\
  &\quad +\sqbrac{f(\omega _5)-f(\omega _4)}\cdot 0=f(\omega _4)=\max\brac{f(\omega )\colon\omega\in A}
\end{align*}
\begin{align*}
\int fdA_*&=f(\omega _1)\cdot 0+\sqbrac{f(\omega _2)-f(\omega _1)}\cdot 0
  +\sqbrac{f(\omega _3)-f(\omega _2)}\cdot 0\\
  &\quad +\sqbrac{f(\omega _4)-f(\omega _3)}\cdot 0+\sqbrac{f(\omega _5)-f(\omega -4)}\cdot 0=0
\end{align*}
which are \eqref{eq310} and \eqref{eq311}, respectively. This last example is more interesting if we let
$B=\brac{\omega _3,\omega _4,\omega _5}$ in which case we have
\begin{align*}
\int fdB_*&=f(\omega _1)\cdot 0+\sqbrac{f(\omega _2)-f(\omega _1)}\cdot 0+\sqbrac{f(\omega _3)-f(\omega _2)}
  +\sqbrac{f(\omega _4)-f(\omega _3)}\\
  &\quad +\sqbrac{f(\omega _5)-f(\omega _4)}=f(\omega _5)-f(\omega _2)
\end{align*}
which are \eqref{eq311}.\hfill\qedsymbol
\end{exam}
\medskip

Let $f$ have the representation \eqref{eq32} and let $\phi _A$ be an atomic coevent for $A\in\ascript$. A straightforward application of \eqref{eq34} shows that
\begin{equation}         % equation (3.12)
\label{eq312}
\int fd\phi _A=\begin{cases}
\alpha _m-\alpha _{m-1}&\text{if }{\displaystyle A=\bigcupdotimn A_i}\text{ for some }m>1\\
\noalign{\smallskip}\alpha _1&\text{if }{\displaystyle A=\bigcupdotionen A_i}\\
\noalign{\smallskip}0&\text{otherwise}\end{cases}
\end{equation}
We mentioned previously that any $\phi\in\ascript ^*$ has a unique representation
$\phi =\phi _1\vee\cdots\vee\phi _m$ where $\phi _1,\ldots ,\phi _m$ are the distinct atoms below $\phi$. The next result shows that the $q$-integral with respect to $\phi$ is just the sum of the $q$-integrals with respect to the
$\phi _i$, $i=1,\ldots ,m$.

\begin{thm}       % Theorem 3.1
\label{thm31}
If $\phi _1,\ldots ,\phi _m$ are distinct atoms in $\ascript ^*$ and $f\colon\Omega\to\real ^+$, then
\begin{equation*}
\int fd(\phi _1\vee\cdots\vee\phi _m)=\int fd\phi _1+\cdots +\int fd\phi _m
\end{equation*}
\end{thm}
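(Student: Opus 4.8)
The plan is to isolate the one property of the $q$-integral that the theorem really uses, namely that $\int fd\phi$ is an $\real$-linear functional of the values of $\phi$ on the level sets of $f$. Writing $f$ in canonical form \eqref{eq32} and setting $S_j=\bigcupdotijn A_i$ for $j=1,\dots ,n$, formula \eqref{eq34} becomes
\[
\int fd\phi =\sum _{j=1}^n c_j\,\phi (S_j),\qquad c_j=\alpha _j-\alpha _{j-1}\ (\text{with }\alpha _0=0),
\]
where the coefficients $c_j$ depend on $f$ alone and not on $\phi$. This is the structural fact that drives the argument: against whichever coevent we integrate, the output is one and the same fixed real-linear combination of that coevent's values on the descending chain $S_1\supseteq\cdots\supseteq S_n$.

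The second ingredient I would establish is a combinatorial lemma: for distinct atoms $\phi _1,\dots ,\phi _m$ the join acts as an ordinary integer-valued sum,
\[
(\phi _1\vee\cdots\vee\phi _m)(B)=\sum _{i=1}^m\phi _i(B)\qquad\text{for every }B\in\ascript .
\]
Each atom has the form $\phi _i=\phi _{B_i}$ with $\phi _{B_i}(B)=1$ iff $B=B_i$, and distinctness of the atoms forces the sets $B_i$ to be distinct. The single observation I need is that at any fixed event $B$ at most one of the values $\phi _1(B),\dots ,\phi _m(B)$ can be nonzero; consequently their sum lies in $\brac{0,1}$ and coincides with the pointwise maximum that defines the join. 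Formally this is an easy induction on $m$ using $(\phi\vee\psi )(B)=\phi (B)\oplus\psi (B)\oplus\phi (B)\psi (B)$, in which the cross term is always $0$ and $\oplus$ therefore collapses to ordinary addition.

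With both facts in hand the theorem falls out by interchanging the two finite sums. Applying the linear formula to $\Phi =\phi _1\vee\cdots\vee\phi _m$ and substituting the lemma gives
\[
\int fd\Phi =\sum _{j=1}^n c_j\Phi (S_j)=\sum _{j=1}^n c_j\sum _{i=1}^m\phi _i(S_j)
  =\sum _{i=1}^m\sum _{j=1}^n c_j\phi _i(S_j)=\sum _{i=1}^m\int fd\phi _i,
\]
which is the assertion. The one place that demands genuine care is the combinatorial lemma, and it is exactly there that the distinctness hypothesis is indispensable: a repeated atom would be absorbed by the idempotent join on the left but counted twice on the right, so the identity would fail. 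Everything else is bookkeeping; one could alternatively prove the result first for a single atom from \eqref{eq312} and then sum, but routing through the level-set linearity above sidesteps all case analysis.
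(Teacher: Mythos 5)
Your proof is correct, but it is organized differently from the paper's. The paper proves the case $m=2$ only: it expands $\int fd(\phi _A\vee\phi _B)$ via \eqref{eq34}, observes that at most two terms of the resulting sum can be nonzero, and then runs a case analysis (one nonzero term, two nonzero terms, with subcases according to which of $\phi _A$, $\phi _B$ is responsible for each), closing with ``the other cases are similar'' and ``the general case is similar.'' You instead isolate two structural facts: first, that \eqref{eq34} exhibits $\int fd\phi$ as a fixed real-linear combination $\sum _jc_j\phi (S_j)$ of the coevent's values on the descending chain $S_j=\bigcupdotijn A_i$, with coefficients $c_j=\alpha _j-\alpha _{j-1}$ depending on $f$ alone; and second, that the join of distinct atoms is their pointwise integer sum, $(\phi _1\vee\cdots\vee\phi _m)(B)=\sum _i\phi _i(B)$, because distinct atoms have disjoint supports so the cross terms in $(\phi\vee\psi )(B)=\phi (B)\oplus\psi (B)\oplus\phi (B)\psi (B)$ vanish and $\oplus$ collapses to ordinary addition. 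Interchanging the two finite sums then finishes the argument. Both proofs ultimately rest on \eqref{eq34} and on the fact that at any fixed event at most one atom evaluates to $1$, but your decomposition makes that fact an explicit lemma rather than the implicit driver of a case split. What this buys you is a uniform treatment of arbitrary $m$ with no cases at all, eliminating the ``similar'' gaps in the paper's writeup; what the paper's version buys is a very concrete hands-on verification that stays close to the definition and to formula \eqref{eq312}, which may be pedagogically clearer for a reader who has not yet internalized that the $q$-integral, despite being nonlinear in $f$, is linear in the coevent values.
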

\begin{proof}
We give the proof for $m=2$ and the general case is similar. Letting $f$ have the representation \eqref{eq32} and
$\phi _1=\phi _A$, $\phi _2=\phi _B$ for $A\ne B$, then \eqref{eq34} gives
\begin{align}         % equation (3.13)
\label{eq313}
\int fd(\phi _A\vee\phi _B)&=\alpha _1(\phi _A\vee\phi _B)\paren{\bigcupdotionen A_i}+(\alpha _2-\alpha _1)
  (\phi _A\vee\phi _B)\paren{\bigcupdotitwon A_i}\notag\\
  &\quad +\cdots +(\alpha _n-\alpha _{n-1})(\phi _A\vee\phi _B)(A_n)
\end{align}
If the right side of \eqref{eq313} is zero, then it is clear that
\begin{equation*}
\int fd\phi _A=\int fd\phi _B=0
\end{equation*}
and we are finished. Otherwise, at most two of the terms on the right side of \eqref{eq313} are nonzero. Suppose the
$j$th term is the only nonzero term. We can assume without loss of generality that
\begin{equation*}
\phi _A\paren{\bigcupdotijn A_i}=1\quad\text{and}\quad\phi _B\paren{\bigcupdotijn A_i}=0
\end{equation*}
Then $\int fd\phi _B=0$ and we have
\begin{equation*}
\int fd(\phi _A\vee\phi _B)=\alpha _j-\alpha _{j-1}=\int fd\phi _A+\int fd\phi _B
\end{equation*}
Next suppose the $j$th term and the $k$th term are nonzero. There are three cases one of which being
\begin{equation*}
\phi _A\paren{\bigcupdotijn A_i}=1,\ \phi _B\!\paren{\bigcupdotijn A_i}=0
  \ \ \text{and}\ \ \phi _B\!\paren{\bigcupdotikn A_i}=1,\ \phi _A\!\paren{\bigcupdotikn A_i}=0
\end{equation*}
In this case we have
\begin{equation*}
\int fd(\phi _A\vee\phi _B)=\alpha _j-\alpha _{j-1}+\alpha _k-\alpha _{k-1}=\int fd\phi _A+\int fd\phi _B
\end{equation*}
The other cases are similar.
\end{proof}

As usual in integration theory, for $A\in\ascript$ and $\phi\in\ascript ^*$ we define
\begin{equation*}
\int _Afd\phi =\int f\chi _Ad\phi
\end{equation*}
Simple integrals of this type are $\int _{\brac{\omega}}fd\phi =f(\omega )\phi (\omega )$ and $\int _A1d\phi =\phi (A)$. Other examples are
\begin{align*}
\int _Afd1&=\max\brac{f(\omega )\colon\omega\in A}\\
\int _AfdB^*&=\max\brac{f(\omega )\colon\omega\in A\cap B}
\end{align*}
Moreover, if $0\le f(\omega _1)\le f(\omega _2)$, then
\begin{equation*}
\int _{\brac{\omega _1,\omega _2}}fd\phi =f(\omega _1)\phi\paren{\brac{\omega _1,\omega _2}}
  +\sqbrac{f(\omega _2)-f(\omega _1)}\phi (\omega _2)
\end{equation*}

\begin{exam}{5}    % Example 5.
This example shows that the $q$-integral is not additive, even if $\phi$ is additive, in the sense that in general
\begin{equation*}
\int _{A\subcupdot B}fd\phi\ne\int _Afd\phi +\int _Bfd\phi
\end{equation*}
Let $A\cap B=\emptyset$, $\omega _1\in A$, $\omega _2\in B$ and $f\colon\Omega\to\real ^+$ with
$0<f(\omega _1)<f(\omega _2)$. We then have
\begin{align*}
\int _{A\subcupdot B}fd(\omega _1^*\oplus\omega _2^*)
  &=\int f\chi _{A\subcupdot B}d(\omega _1^*\oplus\omega _2^*)=f(\omega _2)-f(\omega _1)\\
  &\ne f(\omega _1)+f(\omega _2)
  =\int _Afd(\omega _1^*\oplus\omega _2^*)+\int _Bfd(\omega _1^*\oplus\omega _2^*)\hskip 1.75pc\qedsymbol
\end{align*}
\end{exam}    % no need for skip spacing good after displayed equation

\begin{exam}{6}    % Example 6.
This example shows that the $q$-integral is not grade-2 additive, even if $\phi$ is additive, in the sense that in general
\begin{equation*}
\int _{A\subcupdot B\subcupdot C}fd\phi\ne\int _{A\subcupdot B}fd\phi +\int _{A\subcupdot C}fd\phi
  +\int _{B\subcupdot C}fd\phi -\int _Afd\phi -\int _Bfd\phi -\int _Cfd\phi
\end{equation*}
Let $A,B,C\in\ascript$ be mutually disjoint, $\omega _1\in A$, $\omega _2\in B$, $\omega _3\in C$ and
$f\colon\Omega\to\real ^+$ with $0<f(\omega _1)<f(\omega _2)<f(\omega _3)$. For
$\phi =\omega _1^*\oplus\omega _2^*\oplus\omega _3^*$ we have
\begin{equation*}
\int _{A\subcupdot B\subcupdot C}fd\phi =f(\omega _3)-f(\omega _2)+f(\omega _1)
\end{equation*}
However,
\begin{align*}
\int _{A\subcupdot B}fd\phi&+\int _{A\subcupdot C}fd\phi +\int _{B\subcupdot C}fd\phi
  -\int _Afd\phi -\int _Bfd\phi -\int _Cfd\phi\\
  &=f(\omega _3)-f(\omega _2)-3f(\omega _1)\hskip 16.25pc\qedsymbol
\end{align*}
\end{exam}   % no need for skip - spacing good with new section

\section{Reality Filters} % Section 4
We interpret $\ascript ^*$ as the set of coevents that correspond to possible realities of our physical system. Presumably, there is only one ``actual reality'' $\phi _a$. But how do we find $\phi _a$? We need methods for filtering out the unwanted potential realities until we are left only with $\phi _a$. We have seen that
$\ab{\ascript ^*}=2^{(2^n-1)}$ which can be an inconceivably large number. Thus, finding $\phi _a$ is like finding a needle in a double exponential haystack. For example, for a small system with $n=\ab{\Omega}=6$ sample points, we have
\begin{equation*}
\ab{\ascript ^*}=2^{63}=9,223,372,036,854,775,808
\end{equation*}
If this were a classical system, we only have to sort through $6$ potential realities and this could be achieved by a simple observation or measurement. Filters that have been used in the past have been to assume that the potential realities must be additive or that they must be multiplicative. In these two cases there are $2^6=64$ potential realities which is quite manageable. Another filter that has been considered is the assumption that the potential realities must be quadratic in which case we have $2^{21}=1,097,152$ potential realities. Although large, this is a lot better than $2^{63}$.

We now discuss another method for eliminating unwanted coevents. Suppose there is an experimental or theoretical reason for assuming that $A\in\ascript$ does not occur (or that $A$ is false). We then say that $A$ is \textit{precluded} \cite{sor94, sor07, sor09}. A trivial example is the empty set $\emptyset$ which by definition is precluded. Denoting the set of precluded events by $\ascript _0$ we call $\ascript _p=\ascript\smallsetminus\ascript _0$ the set of permitted events. A coevent $\phi$ is \textit{preclusive} if $\phi (A)=0$ for all $A\in\ascript _0$ \cite{sor94, sor07, sor09} and the set of preclusive coevents
\begin{equation*}
\ascript _p^*=\brac{\phi\in\ascript ^*\colon\phi (A)=0\text{ for all }A\in\ascript _0}
\end{equation*}
is the \textit{preclusive anhomomorphic logic}. Although $\ascript _p$ is not a Boolean algebra, $\ascript _p^*$ is a Boolean algebra and its atoms are the preclusive atoms of $\ascript ^*$. Hence,
$\ab{\ascript _p^*}=2^{\ab{\ascript _p}}$. For example, in $\Omega _2$ if $\ascript _0=\brac{\brac{\omega _1}}$, then
$\ascript _p=\brac{\brac{\omega _2},\Omega _2}$ and $\ab{\ascript _p^*}=2^2=4$. We then have
\begin{equation*}
\ascript _p^*=\brac{0,\omega _2^*,\omega _1^*\omega _2^*,\omega _2^*\oplus\omega _1^*\omega _2^*}
\end{equation*}
If $\ascript _0=\brac{\Omega _2}$, then $\ascript _p=\brac{\brac{\omega _1},\brac{\omega _2}}$ and
$\ab{\ascript _p^*}=4$. We then have
\begin{equation*}
\ascript _p^*=\brac{0,\omega _1^*\oplus\omega _1^*\omega _2^*,\omega _2^*\oplus\omega _1^*\omega _2^*,
  \omega _1^*\oplus\omega _2^*}
\end{equation*}
If $\ascript _0=\brac{\brac{\omega _1},\brac{\omega _2}}$, then $\ascript _p=\brac{\Omega _2}$,
$\ab{\ascript _p^*}=2$ and $\ascript _p^*=\brac{0,\omega _1^*\omega _2^*}$.

A coevent $\phi$ is \textit{regular} if
\begin{list} {(R\arabic{cond})}{\usecounter{cond}
\setlength{\rightmargin}{\leftmargin}}
%(R1)
\item $\phi (A)=0$ implies $\phi (A\cupdot B)=\phi (B)$ for all $B\in\ascript$
%(R2)
\item $\phi (A\cupdot B)=0$ implies $\phi (A)=\phi (B)$
\end{list}
Condition (R1) is a reasonable condition and although (R2) is not as clear, it may have merit. (We will mention later that quantum measures are usually assumed to be regular.) In any case, there may be fundamental reasons for assuming that the actual reality is regular which gives another method for filtering out unwanted potential realities. For example the regular coevents on $\Omega _2$ are $0,1,\omega _1^*,\omega _2^*$ and
$\omega _1^*\oplus\omega _2^*$.

The filter $F$ that we now discuss is much stronger than the previous ones. In fact, it is so strong that it may filter out all of $\ascript _p^*$ in which case we say it is not successful. In all the examples we have considered, when $F$ is successful then $F$ admits a unique reality. We believe that if $F$ does not produce a unique coevent, then the number of coevents it does produce is very small. To describe $F$ we shall need the concept of a quantum measure ($q$-measure, for short) \cite{gud1, gud2, sor94, sor07}. A $q$-\textit{measure} is a set function
$\mu\colon\ascript\to\real ^+$ that satisfies the \textit{grade}-2 \textit{additivity condition}
\begin{equation}         % equation (4.1)
\label{eq41}
\mu (A\cupdot B\cupdot C)=\mu (A\cupdot B)+\mu (A\cupdot C)+\mu (B\cupdot C)-\mu (A)-\mu (B)-\mu (C)
\end{equation}
Because of quantum interference, $\mu$ may not satisfy the \textit{grade}-1 \textit{additivity condition}
$\mu (A\cupdot B)=\mu (A)+\mu (B)$ that holds for ordinary measures. Of course, grade-1 additivity implies grade-2 additivity but the converse does not hold \cite{gt09, gud1, sal02}. A $q$-measure $\mu$ is \textit{regular} if $\mu$ satisfies (R1) and (R2) (with $\phi$ replaced by $\mu$) and it is usually assumed that $q$-measures are regular. For generality, we do not make that assumption here. Since an ordinary measure is grade-1 additive, it is determined by its values on singleton sets. In a similar way, a $q$-measure is determined by its values on singleton and doubleton sets. In fact, by \eqref{eq41} we have
\begin{align*}
\mu&\paren{\brac{\omega _1,\omega _2,\omega _3}}\\
  &=\mu\paren{\brac{\omega _1,\omega _2}}+\mu\paren{\brac{\omega _1,\omega _3}}
  +\mu\paren{\brac{\omega _2,\omega _3}}-\mu (\omega _1)-\mu (\omega _2)-\mu (\omega _3)
\end{align*}
and continuing by induction we obtain
\begin{equation}         % equation (4.2)
\label{eq42}
\mu\paren{\brac{\omega _1,\ldots ,\omega _m}}=\sum _{i<j=1}^m\mu\paren{\brac{\omega _i,\omega _j}}
  -(m-2)\sum _{i=1}^m\mu (\omega _i)
\end{equation}

There are reasons to believe that a (finite) quantum system can be described by a $q$-measure space
$(\Omega ,\ascript ,\mu )$ where $\ab{\Omega}<\infty$, $\ascript =2^\Omega$ and $\mu\colon\ascript\to\real ^+$ is a fixed $q$-measure that is specified by nature \cite{gt09, gud1, sor94, sor07}. A $q$-measure $\mu$ on $\ascript$
1-\textit{generates} a coevent $\phi\in\ascript ^*$ if there exists a strictly positive function $f\colon\Omega\to\real$ such that $\mu (A)=\int _Afd\mu$ for all $A\in\ascript$. We call $f$ a $\phi$-\textit{density} for$\mu$. In a sense, $\mu$ is an ``average'' of the density $f$ with respect to the potential reality $\phi$. Put another way, $\mu$ is an ``average'' of the truth values of $\phi$. There are some immediate questions that one might ask.
\begin{list} {(Q\arabic{cond})}{\usecounter{cond}
\setlength{\rightmargin}{\leftmargin}}
%(Q1)
\item Does every $q$-measure 1-generate at least one coevent?
%(Q2)
\item Is every coevent 1-generated by at least one $q$-measure?
%(Q3)
\item If $\mu$ 1-generates $\phi$, is $\phi$ unique?
%(Q4)
\item If $\phi$ is 1-generated by $\mu$, is $\mu$ unique?
%(Q5)
\item If $f$ is a $\phi$-density for $\mu$, is $f$ unique?
\end{list}
We shall show that the answers to (Q1), (Q2), (Q4) and (Q5) are no. We do not know the answer to (Q3) although we have a partial result.

The definition of $\mu$ 1-generating a coevent $\phi$ is quite simple and if $\phi$ is unique that's great, but we shall give examples of $q$-measures that do not 1-generate any coevent. One reason for this is that a function
$f\colon\Omega _n\to\real$ gives at most $n$ pieces of information while a $q$-measure is determined by its values on singleton and doubleton sets so $n(n+1)/2$ pieces of information may be needed. For this reason, we introduce a more complicated (and presumably more general) definition. A function $f\colon\Omega\times\Omega\to\real$ is \textit{symmetric} if $f(\omega ,\omega ')=f(\omega ',\omega )$ for all $\omega ,\omega '\in\Omega$. Notice that a symmetric function on $\Omega _n\times\Omega _n$ has $n(n+1)/2$ possible values. A $q$-measure $\mu$ on
$\ascript$ 2-\textit{generates} a coevent $\phi\in\ascript ^*$ if there exists a strictly positive symmetric function
$f\colon\Omega\times\Omega\to\real$ such that
\begin{equation*}
\mu (A)=\int _A\sqbrac{\int _Af(\omega ,\omega ')d\phi (\omega )}d\phi (\omega ')
\end{equation*}
for every $A\in\ascript$. We again call $f$ a $\phi$-\textit{density} for $\mu$. It is interesting to note that a
$\phi$-density $f$ determines a symmetric matrix $\sqbrac{f(\omega _i,\omega _j)}$ with positive entries which reminds us of a density matrix or state but we have not found this observation useful. The 2-generation of coevents is what we referred to previously as the strong reality filter $F$. In Section~5 we shall present some general results involving 1 and 2-generation of coevents and in Sections~6 and 7 we discuss specific examples. We also consider questions (Q1)--(Q5) for 2-generation of coevents.

The reason we require the density $f$ to be strictly positive is that if $f$ is allowed to be zero, then $\phi$ may be generated by a $q$-measure in a trivial way. For example, suppose that $\phi (\omega _0)=1$ and let
$f=\chi _{\brac{\omega _0}}$. Then for every $A\in\ascript$ we have $\int _Afd\phi =\delta _{\omega _0}(A)$. Thus, the Dirac measure $\delta _{\omega _0}$ 1-generates $\phi$ in a trivial manner. In this sense, $\delta _{\omega _0}$
``generates'' any $\phi\in\ascript ^*$ satisfying $\phi (\omega _0)=1$ so $\phi$ is highly nonunique. However, suppose that $f$ is strictly positive and $\delta _{\omega _0}(A)=\int _Afd\phi$ for every $A\in\ascript$. We then have
\begin{equation*}
1=\int _{\brac{\omega _0}}fd\phi =f(\omega _0)\phi (\omega _0)
\end{equation*}
so that $f(\omega _0)=\phi (\omega _0)=1$. Let $A=\brac{\omega _1,\ldots ,\omega _r}$ with
$0<f(\omega _1)\le\cdots\le f(\omega _r)$. If $\omega _0\notin A$ then by \eqref{eq36} we have
$f(\omega _1)\phi (A)=0$ so $\phi (A)=0$. If $\omega _0\in A$, then $\omega _0=\omega _i$ for some
$i\in\brac{1,\ldots ,r}$. Hence, by \eqref{eq36} we have
\begin{align*}
1=\int _Afd\phi&=f(\omega _1)\phi (A)
  +\sqbrac{f(\omega _2)-f(\omega _1)}\phi\paren{\brac{\omega _2,\ldots ,\omega _r}}\\
  &\quad +\cdots +\sqbrac{f(\omega _i)-f(\omega _{i-1})}\phi\paren{\brac{\omega _i,\ldots ,\omega _r}}
\end{align*}
If $\phi (A)=0$, then
\begin{align*}
1&\le f(\omega _2)-f(\omega _1)+f(\omega _3)-f(\omega _2)+\cdots +f(\omega _0)-f(\omega _{i-1})\\
  &=f(\omega _0)-f(\omega _1)=1-f(\omega _1)<1
\end{align*}
which is a contradiction. Hence, $\phi (A)=1$. It follows that $\phi =\omega _0^*$ is the unique coevent 1-generated by $\delta _{\omega _0}$.

We now show that if $\phi$ is 1-generated by $\mu$ and $\phi (A)\ne 0$ whenever $\mu (A)\ne 0$, then $\phi$ is
2-generated by $\mu$. Indeed, suppose that $\mu (A)=\int _Afd\phi$ for a strictly positive function
$f\colon\Omega\to\real$. Then $g(\omega ,\omega ')=\tfrac{1}{2}\sqbrac{f(\omega )+f(\omega ')}$ is a strictly positive symmetric function on $\Omega\times\Omega$ and we have
\begin{align*}
\int _A\sqbrac{\int _Ag(\omega ,\omega ')d\phi (\omega )}d\phi (\omega ')&=\frac{1}{2}\int _A
\sqbrac{\int _A\paren{f(\omega )+f(\omega ')}d\phi (\omega )}d\phi (\omega ')\\\noalign{\smallskip}
&=\frac{1}{2}\int _A\sqbrac{\mu (A)+f(\omega ')\phi (A)}d\phi (\omega ')\\
&=\mu (A)\phi (A)=\mu (A)
\end{align*}
Hence, $\phi$ is 2-generated by $\mu$. We do not know if $\phi$ 1-generated implies $\phi$ 2-generated, in general.

\section{General Results} % Section 5
If $\mu$ is a $q$-measure on $\ascript$ and $\phi\in\ascript ^*$ we say that $\phi$ is $\mu$-\textit{preclusive} if
$\phi (A)=0$ whenever $\mu (A)=0$. The following result shows that generation is stronger than preclusivity.

\begin{thm}       % Theorem 5.1
\label{thm51}
If $\phi$ is 1-generated or 2-generated by $\mu$, then $\phi$ is $\mu$-preclusive.
\end{thm}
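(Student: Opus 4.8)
The plan is to isolate one elementary fact about the $q$-integral and then apply it once for the 1-generated case and twice for the 2-generated case. The fact is the following: if $h\colon\Omega\to\real ^+$ is strictly positive and $\emptyset\ne A$, write $A=\brac{\omega _1,\ldots ,\omega _r}$ ordered so that $h(\omega _1)\le\cdots\le h(\omega _r)$. Since $h>0$ everywhere, the positive support of $h\chi _A$ is exactly $A$, so formula \eqref{eq36} expresses $\int _Ahd\phi$ as $h(\omega _1)\phi (A)$ plus a sum of terms $\sqbrac{h(\omega _{i+1})-h(\omega _i)}\phi\paren{\brac{\omega _{i+1},\ldots ,\omega _r}}$, each of which is nonnegative (the bracketed differences are $\ge 0$ by the ordering and the $\phi$-values lie in $\brac{0,1}$). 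Hence $\int _Ahd\phi\ge h(\omega _1)\phi (A)$ with $h(\omega _1)>0$, so $\int _Ahd\phi =0$ forces $\phi (A)=0$. Equivalently, if $\phi (A)=1$ then $\int _Ahd\phi >0$.

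The 1-generated case then follows at once. Suppose $\mu (A)=0$. Then $\int _Afd\phi =\mu (A)=0$ with $f$ strictly positive, so the observation immediately yields $\phi (A)=0$; that is, $\phi$ is $\mu$-preclusive.

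For the 2-generated case I would argue by contradiction. Suppose $\mu (A)=0$ but $\phi (A)=1$. Put $g(\omega ')=\int _Af(\omega ,\omega ')d\phi (\omega )$ for the inner integral. For each fixed $\omega '$ the function $\omega\mapsto f(\omega ,\omega ')$ is strictly positive, so applying the observation with $\phi (A)=1$ gives $g(\omega ')\ge f(\omega _1,\omega ')\phi (A)>0$; thus $g$ is strictly positive on $A$. Now the outer integral is $\mu (A)=\int _Ag(\omega ')d\phi (\omega ')$, the $q$-integral of a strictly positive function over $A$ with $\phi (A)=1$, so the observation applied once more gives $\mu (A)>0$, contradicting $\mu (A)=0$. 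Therefore $\phi (A)=0$.

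I do not expect a serious obstacle. The only subtle point is that the inner integral $g$ is a priori merely nonnegative rather than strictly positive, so one cannot apply the vanishing-integral conclusion directly to the outer integral; routing through the assumption $\phi (A)=1$ is exactly what upgrades $g$ to strictly positive and lets the single observation do all the work. One should also dispose of the edge case $A=\emptyset$, where $\phi (\emptyset )=0$ holds by the standing convention, so the claim is trivial.
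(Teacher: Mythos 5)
Your proof is correct and follows essentially the same route as the paper: for 1-generation you read off $\phi(A)=0$ from the vanishing of the first (strictly positive coefficient) term in \eqref{eq36}, and for 2-generation you argue by contradiction that $\phi(A)=1$ forces the inner integral $g$ to be bounded below by a positive quantity on $A$, whence the outer integral is positive. The only difference is presentational: you package the key positivity fact as a single lemma applied three times, whereas the paper inlines it, using the minima $m(\omega')$ and $m$ explicitly.
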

\begin{proof}
Suppose that $\phi$ is 1-generated by $\mu$ and $\mu (A)=\int _Afd\phi$ for all $A\in\ascript$. If
$A=\brac{\omega _1,\ldots ,\omega _r}$ with $\mu (A)=0$, then by \eqref{eq36} we have
\begin{equation*}
f(\omega _1)\phi (A)+\sqbrac{f(\omega _2)-\!f(\omega _1)}\phi\paren{\brac{\omega _2,\ldots ,\omega _r}}
   +\cdots +\sqbrac{f(\omega _r)-f(\omega _{r-1})}\mu (\omega _r)\!=\!0
\end{equation*}
Since $f(\omega _1)>0$ we conclude that $\phi (A)=0$. Next suppose that $\phi$ is 2-generated by $\mu$ and
\begin{equation*}
\mu (A)=\int _A\sqbrac{\int _Af(\omega ,\omega ')d\phi (\omega )}d\phi (\omega ')
\end{equation*}
for all $A\in\ascript$. Assume that $\mu (A)=0$ but $\phi (A)=1$. Let
\begin{equation*}
g(\omega ')=\int _Af(\omega ,\omega ')d\phi (\omega )
\end{equation*}
If $m(\omega ')=\min\brac{f(\omega ,\omega ')\colon\omega\in A}$, then $m(\omega ')>0$ and by \eqref{eq36}
$g(\omega ')\ge m(\omega ')$ for all $\omega '\in A$. Letting $m=\min\brac{m(\omega ')\colon\omega '\in A}$ we have that $g(\omega ')\ge m>0$ for all $\omega '\in A$. But then
\begin{equation*}
\mu (A)=\int _Ag(\omega ')d\phi (\omega ')>0
\end{equation*}
This is a contradiction so $\phi (A)=0$.
\end{proof}

The next theorem is a partial uniqueness result which shows that within the set of regular coevents 1-generated coevents are unique.

\begin{thm}       % Theorem 5.2
\label{thm52}
If $\phi ,\psi\in\ascript ^*$ are regular and $\mu$ 1-generates both $\phi$ and $\psi$, then $\phi =\psi$.
\end{thm}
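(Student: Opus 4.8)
The plan is to determine $\phi$ and $\psi$ first on singletons, then invoke regularity to discard the points outside their common support, and finally to induct on cardinality in the region where the two densities agree.

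First I would compare the two representations on singletons. Writing $\mu(A)=\int_Af\,d\phi$ and $\mu(A)=\int_Ag\,d\psi$ with $f,g$ strictly positive, the elementary formula $\int_{\brac{\omega}}f\,d\phi=f(\omega)\phi(\omega)$ gives $f(\omega)\phi(\omega)=\mu(\brac{\omega})=g(\omega)\psi(\omega)$ for every $\omega\in\Omega$. Since $f,g>0$ and $\phi(\omega),\psi(\omega)\in\brac{0,1}$, this forces $\phi(\omega)=\psi(\omega)$ for all $\omega$; set $S=\brac{\omega:\phi(\omega)=1}=\brac{\omega:\psi(\omega)=1}$. On $S$ the same identity yields $f(\omega)=g(\omega)=\mu(\brac{\omega})$, so $f$ and $g$ restrict to a common function $h$ on $S$, and this agreement is what makes the final step work.

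Next I would use (R1) to peel away the points outside $S$. If $\omega\in A$ with $\phi(\omega)=0$, then applying (R1) with the null set $\brac{\omega}$ and $B=A\smallsetminus\brac{\omega}$ gives $\phi(A)=\phi(\brac{\omega}\cupdot B)=\phi(B)=\phi(A\smallsetminus\brac{\omega})$. Removing the points of $A\smallsetminus S$ one at a time yields $\phi(A)=\phi(A\cap S)$ for every $A\in\ascript$, and similarly $\psi(A)=\psi(A\cap S)$. Hence it suffices to show $\phi(A)=\psi(A)$ for $A\subseteq S$, precisely the region where the densities coincide. The concluding step is then an induction on $\ab{A}$ for $A\subseteq S$: the case $\ab{A}=1$ is the singleton identity, and for the inductive step I write $A=\brac{\omega_1,\ldots,\omega_m}\subseteq S$ with the labeling chosen so that $h(\omega_1)\le\cdots\le h(\omega_m)$. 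Because $f=g=h$ on $S$, this one ordering governs both expansions \eqref{eq36}; equating $\int_Af\,d\phi=\mu(A)=\int_Ag\,d\psi$ and subtracting, every term beyond the first carries a factor $\phi(\brac{\omega_i,\ldots,\omega_m})-\psi(\brac{\omega_i,\ldots,\omega_m})$ with $i\ge2$, and these vanish by the inductive hypothesis since the sets have cardinality less than $m$. What survives is $h(\omega_1)\sqbrac{\phi(A)-\psi(A)}=0$, and $h(\omega_1)>0$ forces $\phi(A)=\psi(A)$.

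I expect the main obstacle to be exactly that $f$ and $g$ are a priori unrelated off $S$, so the orderings dictating the two expansions in \eqref{eq36} could disagree and a term-by-term comparison would collapse. Regularity (R1) is what removes this difficulty by reducing everything to $A\subseteq S$, where $f=g$ and the expansions align; ties among the values of $h$ cause no trouble, since the increments $h(\omega_i)-h(\omega_{i-1})$ then vanish. Notably, this argument appears to use only (R1); condition (R2) does not seem to be needed.
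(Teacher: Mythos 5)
Your proof is correct, but it takes a genuinely different route from the paper's. Both arguments open identically: comparing $\mu$ on singletons forces $\phi$ and $\psi$ to agree there and forces $f=g$ on the common support $S$. From that point the paper proceeds by induction on cardinality over \emph{all} events, handling doubletons, then tripletons, etc., by a case analysis that invokes both regularity conditions: from $\phi\paren{\brac{\omega_1,\omega_2}}=0$ it uses (R2) to conclude $\phi(\omega_1)=\phi(\omega_2)$, splits into the all-zero and all-one cases, and in the latter derives the contradiction $f(\omega_1)=0$ by computing both integrals. You instead use (R1) to peel points of $\Omega\smallsetminus S$ off an arbitrary event, reducing the whole problem to events $A\subseteq S$, and there you run a uniform induction: since both expansions in \eqref{eq36} are governed by the single function $h=f|_S=g|_S$, subtracting them term by term kills every tail term by the inductive hypothesis and leaves $h(\omega_1)\sqbrac{\phi(A)-\psi(A)}=0$. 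This buys two things the paper's proof does not deliver: the argument is structurally uniform (no growing case analysis; the paper's tripleton step is already somewhat tangled), and it isolates exactly which hypothesis is used --- you correctly observe that (R2) is never needed, so you have in fact proved a slightly stronger theorem, namely uniqueness among coevents satisfying (R1) alone. The paper's approach, by contrast, yields explicit numerical relations (e.g., $\mu\paren{\brac{\omega_1,\omega_2}}$ equals $\mu(\omega_2)-\mu(\omega_1)$ or $\mu(\omega_2)$) as a by-product of its case analysis, which the author reuses in the style of Lemma~\ref{lem61}, but as a proof of Theorem~\ref{thm52} your organization is cleaner and your handling of ties in the ordering (the increments vanish, so a common labeling can be fixed for both expansions) closes the one point where such an argument could have been delicate.
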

\begin{proof}
We have that $\int _Afd\phi =\int _Agd\psi$ for all $A\in\ascript$ where $f$ and $g$ are strictly positive. If
$\phi (\omega )=0$, then
\begin{equation*}
\int _{\brac{\omega}}fd\phi =0=\int _{\brac{\omega}}gd\psi =g(\omega )\psi (\omega )
\end{equation*}
so $\psi (\omega )=0$. By symmetry $\phi (\omega )=0$ if and only if $\psi (\omega )=0$ so $\phi$ and $\psi$ agree on all singleton sets. If $\phi (\omega )=1$ so $\psi (\omega )=1$, then 
\begin{equation}         % equation (5.1)
\label{eq51}
f(\omega )=\int _{\brac{\omega}}fd\phi =\int _{\brac{\omega}}gd\phi =g(\omega )
\end{equation}
Suppose $\phi\paren{\brac{\omega _1\omega _2}}=0$. By regularity $\phi (\omega _1)=\phi (\omega _2)=1$ or
$\phi (\omega _1)=\phi (\omega _2)=0$. In the second case $\psi (\omega _1)=\psi (\omega _2)=0$ so by regularity
\begin{equation*}
\psi\paren{\brac{\omega _1,\omega _2}}=\phi\paren{\brac{\omega _1,\omega _2}}=0
\end{equation*}
Suppose $\phi (\omega _1)=\phi (\omega _2)=1$. Then $\psi (\omega _1)=\psi (\omega _2)=1$ and by \eqref{eq51} we have $f(\omega _1)=g(\omega _1)$ and $f(\omega _2)=g(\omega _2)$. Suppose that
$f(\omega _1)\le f(\omega _2)$. Then if $\psi\paren{\brac{\omega _1,\omega _2}}=1$ we have
\begin{equation*}
\int _{\brac{\omega _1,\omega _2}}\!\!\!\!\!\!\!\!\!\!fd\phi =f(\omega _2)-f(\omega _1)
=\int _{\brac{\omega _1,\omega _2}}\!\!\!\!\!\!\!\!\!\!gd\psi
  =g(\omega _1)+g(\omega _2)-g(\omega _1)=g(\omega _2)=f(\omega _2)
\end{equation*}
which implies $f(\omega _1)=0$, a contradiction. Hence, $\phi$ and $\psi$ agree on all doubleton sets. Suppose
$\phi\paren{\brac{\omega _1,\omega _2,\omega _3}}=0$ and
$\psi\paren{\brac{\omega _1,\omega _2,\omega _3}}=1$.
If $\phi\paren{\brac{\omega _2,\omega _3}}=\phi (\omega _1)=0$, then
$\psi\paren{\brac{\omega _2,\omega _3}}=\psi (\omega _1)=1$. If $\phi (\omega _3)=0$ we obtain by regularity that
\begin{equation*}0=\phi\paren{\brac{\omega _1,\omega _2,\omega _3}}=\phi\paren{\brac{\omega _2,\omega _3}}=1
\end{equation*}
which is a contradiction. Hence, $\phi (\omega _3)=1$. Suppose that $f(\omega _1)\le f(\omega _2)\le f(\omega _3)$. Then
\begin{align*}
f(\omega _3)-f(\omega _1)&=f(\omega _2)-f(\omega _1)+f(\omega _3)-f(\omega _2)\\
  &=\int _{\brac{\omega _1,\omega _2,\omega _3}}\!\!\!\!\!\!\!\!\!\!fd\phi
  =\int _{\brac{\omega _1,\omega _2,\omega _3}}\!\!\!\!\!\!\!\!\!\!gd\psi\\
  &=g(\omega _1)+g(\omega _2)-g(\omega _1)+g(\omega _3)-g(\omega _2)\\
  &=g(\omega _3)=f(\omega _3)
\end{align*}
Hence, $f(\omega _1)=0$ which is a contradiction. We conclude that $\phi$ and $\psi$ agree on all tripleton sets. Continue this process by induction.
\end{proof}

For a set function $\nu\colon\ascript _m\to\real ^+$ we define the \textit{expansion}
$\nuhat\colon\ascript _n\to\real ^+$, $n\ge m$, of $\nu$ by $\nuhat (A)=\nu (A\cap\Omega _m)$. As a special case, if
$\phi\colon\ascript _m\to\integers _2$, the expansion $\phihat\colon\ascript _n\to\integers _2$ of $\phi$ is given by
$\phihat (A)=\phi (A\cap\Omega _m)$. It is clear that if $\phi\in\ascript _m^*$, then $\phihat\in\ascript _n^*$. Also, if
$\phi$ has the evaluation map representation
\begin{equation*}
\phi =\omega _1^*\oplus\cdots\oplus\omega _r^*\oplus\omega _s^*\omega _t^*
  \oplus\cdots\oplus\omega _u^*\omega _v^*\omega _w^*\oplus\cdots
\end{equation*}
on $\ascript _m$ then $\phihat$ has the same representation on $\ascript _n$. It is also clear that $\nu =\nuhat\mid\ascript _m$ where $\nuhat\mid\ascript _m$ is the restriction of $\nuhat$ to $\ascript _m$. The proof of the next lemma is straightforward.

\begin{lem}       % Lemma 5.3
\label{lem53}
{\rm (a)}\enspace $\mu$ is a $q$-measure on $\ascript _m$ if and only if $\muhat$ is a $q$-measure on $\ascript _n$.
{\rm (b)}\enspace $\mu$ is a regular $q$-measure on $\ascript _m$ if and only if $\muhat$ is a regular $q$-measure
on $\ascript _n$.
\end{lem}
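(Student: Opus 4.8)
The plan is to exploit the single structural fact that intersecting with $\Omega_m$ is compatible with all the Boolean operations that appear in grade-2 additivity and in regularity. For any $A,B,C\in\ascript_n$ write $A'=A\cap\Omega_m$, $B'=B\cap\Omega_m$, $C'=C\cap\Omega_m$. If $A,B,C$ are mutually disjoint, then so are $A',B',C'$, and moreover $(A\cupdot B)\cap\Omega_m=A'\cupdot B'$, $(A\cupdot B\cupdot C)\cap\Omega_m=A'\cupdot B'\cupdot C'$, and so on. Since $\muhat(A)=\mu(A\cap\Omega_m)$ by definition, every term of the grade-2 identity \eqref{eq41} for $\muhat$ evaluated at the triple $A,B,C$ equals the corresponding term of \eqref{eq41} for $\mu$ evaluated at $A',B',C'$. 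This dictionary is the whole engine of the proof.

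For part (a) I would first treat the forward direction. Assuming $\mu$ satisfies \eqref{eq41} on $\ascript_m$, the observation above shows term by term that $\muhat$ satisfies \eqref{eq41} at every mutually disjoint triple in $\ascript_n$; nonnegativity of $\muhat$ is immediate from that of $\mu$. For the converse I would use the identity $\mu=\muhat\mid\ascript_m$ noted just before the lemma: for $A\subseteq\Omega_m$ one has $A\cap\Omega_m=A$, so $\muhat(A)=\mu(A)$, and any mutually disjoint triple in $\ascript_m$ is also a mutually disjoint triple in $\ascript_n$. Hence \eqref{eq41} for $\muhat$ specializes to \eqref{eq41} for $\mu$, so $\mu$ is a $q$-measure.

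For part (b) I would run the same translation through conditions (R1) and (R2). In the forward direction, suppose $\muhat(A)=0$ with $A,B$ disjoint in $\ascript_n$; then $\mu(A')=0$ while $A',B'$ are disjoint in $\ascript_m$, so (R1) for $\mu$ gives $\mu(A'\cupdot B')=\mu(B')$, which is exactly $\muhat(A\cupdot B)=\muhat(B)$. Likewise, if $\muhat(A\cupdot B)=0$ then $\mu(A'\cupdot B')=0$, and (R2) for $\mu$ yields $\mu(A')=\mu(B')$, that is $\muhat(A)=\muhat(B)$. The converse is handled as in part (a): one restricts attention to test sets contained in $\Omega_m$, on which $\muhat$ agrees with $\mu$ and the Boolean operations are unchanged, so (R1) and (R2) for $\muhat$ restrict to the same conditions for $\mu$.

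There is no genuine obstacle here; the entire content is the bookkeeping that the map $A\mapsto A\cap\Omega_m$ carries disjoint unions in $\ascript_n$ to disjoint unions in $\ascript_m$ and preserves disjointness. The one point worth stating carefully is that this map is surjective onto $\ascript_m$ and is the identity on subsets of $\Omega_m$, since that is precisely what makes both directions of each equivalence close up; everything else is a term-by-term comparison of \eqref{eq41}, (R1), and (R2).
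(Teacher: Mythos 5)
Your proof is correct. The paper in fact gives no argument at all for this lemma (it is dismissed with ``The proof of the next lemma is straightforward''), and your write-up is precisely the intended routine verification: the map $A\mapsto A\cap\Omega _m$ preserves disjointness and disjoint unions, so \eqref{eq41}, (R1) and (R2) for $\muhat$ translate term by term into the corresponding statements for $\mu$, while the converses follow because $\muhat$ restricts to $\mu$ on $\ascript _m$ (your only cosmetic slip is writing $A'$ for $A\cap\Omega _m$, which clashes with the paper's use of $A'$ for the complement).
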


We call the next result the expansion theorem.

\begin{thm}       % Theorem 5.4
\label{thm54}
{\rm (a)}\enspace $\mu$ 1-generates $\phi$ if and only if $\muhat$ 1-generates $\phihat$
{\rm (b)}\enspace $\mu$ 2-generates $\phi$ if and only if $\muhat$ 2-generates $\phihat$.
\end{thm}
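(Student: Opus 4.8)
The plan is to isolate a single lemma describing how the $q$-integral interacts with expansion, and then obtain all four implications by the twin operations of extending and restricting densities. The central observation is that $\phihat$ is insensitive to points outside $\Omega_m$: since $\phihat(S)=\phi(S\cap\Omega_m)$, the value $\phihat(S)$ is unchanged by adjoining or deleting any point lying outside $\Omega_m$.

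First I would establish the following. For $g\colon\Omega_n\to\real^+$ and $A\in\ascript_n$,
\begin{equation*}
\int_A gd\phihat=\int_{A\cap\Omega_m}\paren{g\mid\Omega_m}d\phi.
\end{equation*}
To prove it, expand the left side by \eqref{eq37}, ordering the points of $A$ at which $g>0$ by increasing $g$-value as $\nu_1,\ldots,\nu_s$. Each summand carries the factor $\phihat\paren{\brac{\nu_k,\ldots,\nu_s}}-\phihat\paren{\brac{\nu_{k+1},\ldots,\nu_s}}$, and by the observation above this factor vanishes whenever $\nu_k\notin\Omega_m$. The surviving summands are exactly those indexed by the points of $A\cap\Omega_m$; since a subsequence of an increasing sequence is increasing, these points inherit the correct ordering, and for each of them the tail sets intersected with $\Omega_m$ are precisely the upper tails in the $\Omega_m$-ordering. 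Hence the surviving summands reassemble, again via \eqref{eq37}, into the right side. This is the main obstacle; the remaining steps are bookkeeping built on top of it.

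Next I would promote this to the double integral needed for part (b). For symmetric $F\colon\Omega_n\times\Omega_n\to\real^+$ and $A\in\ascript_n$, apply the lemma first in the inner variable: for each $\omega'$, $\int_A F(\omega,\omega')d\phihat(\omega)=\int_{A\cap\Omega_m}F(\omega,\omega')d\phi(\omega)$; call this $G(\omega')$. Because $q$-integrals of nonnegative functions are nonnegative, $G\ge 0$, so the lemma applies a second time to $G$, and since $G\mid\Omega_m$ involves only values of $F$ on $\Omega_m\times\Omega_m$ I would obtain
\begin{equation*}
\int_A\sqbrac{\int_A Fd\phihat}d\phihat=\int_{A\cap\Omega_m}\sqbrac{\int_{A\cap\Omega_m}Fd\phi}d\phi.
\end{equation*}

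Finally I would close all four implications. For the forward direction of (a), if $\mu(A)=\int_A fd\phi$ on $\ascript_m$, extend $f$ to a strictly positive $\hat f$ on $\Omega_n$ by assigning arbitrary positive values on the new points; the lemma gives $\int_A\hat fd\phihat=\int_{A\cap\Omega_m}fd\phi=\mu(A\cap\Omega_m)=\muhat(A)$. For the converse, restrict a density $g$ for $\phihat$ to $\Omega_m$; for $A\in\ascript_m$ one has $A\cap\Omega_m=A$, so $\int_A\paren{g\mid\Omega_m}d\phi=\int_A gd\phihat=\muhat(A)=\mu(A)$. Part (b) is identical with the double-integral identity in place of the single one, noting that restriction and symmetric extension both preserve symmetry and strict positivity, and that Lemma~\ref{lem53} guarantees $\muhat$ is a $q$-measure exactly when $\mu$ is.
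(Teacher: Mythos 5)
Your proof is correct, and it takes a genuinely different route from the paper's. The paper proves (a) by making a careful choice of extension: it sets $\fhat =f$ on $\Omega _m$ and $\fhat =M$ on the new points, with $M=\max\brac{f(\omega _i)\colon\omega _i\in\Omega _m}$, precisely so that in the ordering underlying \eqref{eq36} the new points sit at the top, where the tail sets lie outside $\Omega _m$ and contribute nothing; part (b) repeats this with the more delicate $M=\max (M_1,M_2)$, where $M_2$ bounds the possible inner integrals so that the same ordering trick survives the second integration. Your restriction lemma, $\int _Ag\,d\phihat =\int _{A\cap\Omega _m}\paren{g\mid\Omega _m}d\phi$ for \emph{every} nonnegative $g$ on $\Omega _n$, is strictly stronger: the observation that the difference factor in \eqref{eq37} vanishes whenever the leading point lies outside $\Omega _m$ kills the interleaved terms no matter where the new values fall in the ordering, so no choice of $M$ is needed at all. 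This modularity buys you three things: the converse implications, which the paper dismisses as ``straightforward,'' become one-line consequences of the same lemma via restriction; part (b) follows by applying the lemma twice (your remark that the inner integral $G\ge 0$ is exactly what licenses the second application) without any $M_2$ construction; and it sharpens the paper's closing remark that $M$ could be replaced by ``any larger number''---your lemma shows any strictly positive values on the new points work. The only point worth making explicit in a final write-up is the treatment of ties: \eqref{eq37} presumes an ordering by increasing $g$-value, and when values are equal the tied terms carry zero coefficient, so the value of the sum is independent of how ties are broken; this is what makes your subsequence-ordering argument fully rigorous.
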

\begin{proof}
(a)\enspace Suppose $\mu (A)=\int _Ad\phi$ for all $A\in\ascript$, where $f\colon\Omega _m\to\real$ is strictly positive. For $n>m$, define $\fhat\colon\Omega _n\to\real$ by
\begin{equation*}
\fhat (\omega _i)=\begin{cases}f(\omega _i)&\textit{if }\omega _i\in\Omega _m\\
M&\textit{if }\omega _i\in\Omega _n\smallsetminus\Omega _m\end{cases}
\end{equation*}
where $M=\max\brac{f(\omega _i)\colon\omega _i\in\Omega _m}$. Then $\fhat\colon\Omega _n\to\real$ is strictly
positive. Let $A=\brac{\omega _1,\ldots ,\omega _r,\omega _{r+1},\ldots ,\omega _s}$ where $\omega _1,\ldots ,\omega _r\in\Omega _m$ and $\omega _{r+1},\ldots ,\omega _s\in\Omega _n\smallsetminus\Omega _m$. We can assume without loss of generality that
\begin{equation*}
\fhat (\omega _1)\le\fhat (\omega _2)\le\cdots\le\fhat (\omega _r)\le M
  =\fhat (\omega _{r+1})=\cdots =\fhat (\omega _s)
\end{equation*}
Then by \eqref{eq36} we have
\begin{align*}
\int _A\fhat d\phihat&=\fhat (\omega _1)\phihat (A)+\sqbrac{\fhat (\omega _2)-\fhat (\omega _1)}
  \phihat\paren{\brac{\omega _2,\ldots ,\omega _s}}\\
  &\quad +\cdots +\sqbrac{\fhat (\omega _r)-\fhat (\omega _{r-1})}\phihat\paren{\brac{\omega _r,\ldots ,\omega _s}}\\
  &=f(\omega _1)\phi (A\cap\Omega _m)+\sqbrac{f(\omega _2)-f(\omega _1)}
  \phi\paren{\brac{\omega _2,\ldots ,\omega _r}}\\
  &\quad +\cdots +\sqbrac{f(\omega _r)-f(\omega _{r-1})}\phi (\omega _r)\\
  &=\int _{A\cap\Omega _m}fd\phi =\mu (A\cap\Omega _m)=\muhat (A)
\end{align*}
Hence, $\muhat$ 1-generates $\phihat$. The converse is straightforward.\newline
(b)\enspace Suppose
\begin{equation*}
\mu (A)=\int _A\sqbrac{\int _Af(\omega ,\omega ')d\phi (\omega )}d\phi (\omega ')
\end{equation*}
for all $A\in\ascript _m$ where $f\colon\Omega _m\times\Omega _m\to\real$ is strictly positive and symmetric. For $n>m$ define $\fhat\colon\Omega _n\times\Omega _n\to\real$ by
\begin{equation*}
\fhat (\omega _i,\omega _j)=
\begin{cases}f(\omega _i,\omega _j)&\text{if }(\omega _i,\omega _j)\in\Omega _m\times\Omega _m\\
M&\text{if }(\omega _i,\omega _j)\in\Omega _n\times\Omega _n\smallsetminus\Omega _m\times\Omega _m
\end{cases}
\end{equation*}
where $M=\max (M_1,M_2)$ and
\begin{align*}
M_1&=\max\brac{f(\omega _i,\omega _j)\colon (\omega _i,\omega _j)\in\Omega _m\times\Omega _m}\\
M_2&=\max\brac{\int _Af(\omega ,\omega _i)d\phi (\omega )\colon\omega _i\in\Omega _m,A\in\ascript _m}
\end{align*}
Then $\fhat$ is strictly positive and symmetric on $\Omega _n\times\Omega _n$. Again, let
$A=\brac{\omega _1,\ldots ,\omega _r,\omega _{r+1},\ldots ,\omega _s}$ where
$\omega _1,\ldots ,\omega _r\in\Omega _m$ and
$\omega _{r+1},\ldots ,\omega _s\in\Omega _n\smallsetminus\Omega _m$. Assume for simplicity that
\begin{equation*}
\fhat (\omega _1,\omega _1)\le\cdots\le\fhat (\omega _1,\omega _r)\le\fhat (\omega _2,\omega _3)\le\cdots\le
\fhat (\omega _2,\omega _r)\le\cdots\le\fhat (\omega _r,\omega _r)
\end{equation*}
and the other cases will be similar. Letting
\begin{equation*}
g(\omega ')=\int _A\fhat (\omega ,\omega ')d\phihat (\omega )
\end{equation*}
we have as before that
\begin{equation*}
g(\omega _i)=\int _{A\cap\Omega _m}f(\omega ,\omega _i)d\phi (\omega )
\end{equation*}
$i=1,\ldots ,r$, and
\begin{equation*}
g(\omega _{r+1})=\cdots =g(\omega _s)=M\phi (A\cap\Omega _m)
\end{equation*}
As in (a) we obtain
\begin{align*}
\int _A&\sqbrac{\int _A\fhat (\omega ,\omega ')d\phihat (\omega )}d\phihat (\omega ')
  =\int _Ag(\omega ')d\phihat (\omega ')=\int _{A\cap\Omega _m}g(\omega ')d\phi (\omega ')\\
  &=\int _{A\cap\Omega _m}\sqbrac{\int _{A\cap\Omega _m}f(\omega ,\omega ')d\phi (\omega )}d\phi (\omega ')
  =\mu (A\cap\Omega _m)=\muhat (A)
\end{align*}
Hence, $\muhat$ 2-generates $\phihat$. Again, the converse is straightforward.
\end{proof}

Notice that in the proof of Theorem~\ref{thm54} we could replace $M$ by any larger number and the result would be the same. This is one of many examples which show that the density function need not be unique. The expansion theorem can be quite useful. For example, in Section~7 we shall show that
$\phi =\omega _1^*\oplus\omega _2^*\oplus\omega _3^*\oplus\omega _1^*\omega _2^*$ is 2-generated by a
$q$-measure on $\ascript _3$. It follows from the expansion theorem that on $\ascript _n$ for $n>3$, $\phi$ is 2-generated by $\muhat$. Moreover, the restriction of $\phi$ to $\ascript _2$, namely
$\omega _1^*\oplus\omega _2^*\oplus\omega _1^*\omega _2^*$ is 2-generated.

We now present a useful lemma that determines values of a density function when $\phi (\omega )\ne 0$.

\begin{lem}       % Lemma 5.5
\label{lem55}
{\rm (a)}\enspace Suppose $\mu$ 1-generates $\phi$ with $\phi$-density $f$. Then $\phi (\omega )=0$ if and only if
$\mu (\omega )=0$ and $\phi (\omega )\ne 0$ if and only if $f(\omega )=\mu (\omega )$.
{\rm (b)}\enspace Suppose $\mu$ 2-generates $\phi$ with $\phi$-density $f$. Then $\phi (\omega )=0$ if and only if
$\mu (\omega )=0$ and $\phi (\omega )\ne 0$ if and only if $f(\omega ,\omega )=\mu (\omega )$.
\end{lem}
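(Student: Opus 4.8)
The plan is to evaluate both defining identities on singleton sets, where the $q$-integral collapses to an elementary product and all of the combinatorial structure disappears.

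For part (a), I would specialize the hypothesis $\mu(A)=\int_A f d\phi$ to $A=\brac{\omega}$. The paper has already recorded the simple integral $\int_{\brac{\omega}} f d\phi = f(\omega)\phi(\omega)$, so this immediately gives $\mu(\omega)=f(\omega)\phi(\omega)$. Since $f$ is strictly positive we have $f(\omega)>0$, and since $\phi$ is $\integers_2$-valued we have $\phi(\omega)\in\brac{0,1}$. Reading off the two cases then yields the claim: $\phi(\omega)=0$ forces $\mu(\omega)=0$, and conversely $\mu(\omega)=0$ forces $\phi(\omega)=0$ because $f(\omega)\ne 0$; while $\phi(\omega)=1$ forces $\mu(\omega)=f(\omega)$, and conversely $f(\omega)=\mu(\omega)=f(\omega)\phi(\omega)$ forces $\phi(\omega)=1$ after cancelling the nonzero factor $f(\omega)$.

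For part (b), I would carry out the same specialization for the double integral at a fixed singleton $A=\brac{\omega_0}$. Holding $\omega'$ fixed and applying the singleton formula to the inner integral over $\omega$ gives $g(\omega')=\int_{\brac{\omega_0}} f(\omega,\omega')d\phi(\omega)=f(\omega_0,\omega')\phi(\omega_0)$. Applying the singleton formula a second time to the outer integral then gives $\mu(\omega_0)=g(\omega_0)\phi(\omega_0)=f(\omega_0,\omega_0)\phi(\omega_0)^2$. The one point deserving attention is the idempotency $\phi(\omega_0)^2=\phi(\omega_0)$, valid because $\phi(\omega_0)\in\brac{0,1}$; this reduces the expression to $\mu(\omega_0)=f(\omega_0,\omega_0)\phi(\omega_0)$, which has exactly the form of part (a) with the diagonal value $f(\omega_0,\omega_0)$ in place of $f(\omega)$. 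The two equivalences then follow by the verbatim argument of part (a), using strict positivity of $f$ on the diagonal.

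I do not anticipate a genuine obstacle, since both parts reduce to evaluating the defining integral on a one-point set. The only subtlety worth flagging is the use of $\phi(\omega_0)^2=\phi(\omega_0)$ in (b): without it one might expect the quadratic appearance of $\phi$ to enter differently than in the 1-generated case, whereas in fact the Boolean-valuedness of $\phi$ collapses the square to a single factor and makes (b) formally identical to (a).
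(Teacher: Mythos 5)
Your proposal is correct and follows essentially the same route as the paper: both evaluate the defining identity on singleton sets, obtaining $\mu(\omega)=f(\omega)\phi(\omega)$ in part (a) and $\mu(\omega_0)=f(\omega_0,\omega_0)\phi(\omega_0)$ in part (b), and then read off the equivalences from strict positivity of the density and the $\brac{0,1}$-valuedness of $\phi$. The only cosmetic difference is that you make explicit the idempotency step $\phi(\omega_0)^2=\phi(\omega_0)$ in (b), which the paper absorbs silently into its computation of the iterated integral.
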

\begin{proof}
(a)\enspace Since $\mu$ 1-generates $\phi$, we have
\begin{equation*}
\mu (\omega )=\int _{\brac{\omega}}fd\phi =f(\omega )\phi (\omega )
\end{equation*}
Hence, $\mu (\omega )=0$ if and only if $\phi (\omega )=0$. Moreover, if $\phi (\omega )\ne 0$, then 
$f(\omega )=\mu (\omega )$. Conversely, if $f(\omega )=\mu (\omega )$, then
\begin{equation*}
f(\omega )=\int _{\brac{\omega}}fd\phi =f(\omega )\phi (\omega )
\end{equation*}
Since $f(\omega )\ne 0$, $\phi (\omega )=1$.\newline
(b)\enspace Since $\mu$ 2-generates $\phi$, we have
\begin{align*}\mu (\omega _0)
&=\int _{\brac{\omega _0}}\sqbrac{\int _{\brac{\omega _0}}f(\omega ,\omega ')d\phi (\omega )}d\phi (\omega ')
=\int _{\brac{\omega _0}}f(\omega _0,\omega ')\phi (\omega _0)d\phi (\omega ')\\
&=f(\omega _0,\omega _0)\phi (\omega _0)
\end{align*}
Hence $\mu (\omega _0)=0$ if and only if $\phi (\omega _0)=0$. Moreover, if $\phi (\omega _0)\ne 0$, then
$f(\omega _0,\omega _0)=\mu (\omega _0)$. Conversely, if $f(\omega _0,\omega _0)=\mu (\omega _0)$ then
\begin{equation*}
f(\omega _0,\omega _0)=\int _{\brac{\omega _0}}
  \sqbrac{\int _{\brac{\omega _0}}f(\omega _0,\omega ')d\phi (\omega )}d\phi (\omega ')
  =f(\omega _0,\omega _0)\phi (\omega _0)
\end{equation*}
Since $f(\omega _0,\omega _0)\ne 0$, $\phi (\omega _0)=1$.
\end{proof}

The next two results show that 1 and 2-generation are strictly quantum phenomena except for a few simple cases.

\begin{thm}       % Theorem 5.6
\label{thm56}
If $\mu$ is a (grade-1) measure on $\ascript$ that is not a Dirac measure $c\delta _\omega$, then $\mu$ is not 1-generating.
\end{thm}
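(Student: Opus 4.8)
The plan is to argue by contradiction, exploiting Lemma~\ref{lem55}(a) together with the explicit formula~\eqref{eq36} for the $q$-integral over a doubleton. The point is that an ordinary (grade-1) measure is forced to split additively over a two-point set, whereas the $q$-integral of the density over that set, once the density is pinned down, cannot reproduce this sum unless the coevent takes a value outside $\integers_2$.

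First I would reduce the hypothesis to a statement about the support of $\mu$. A grade-1 measure is determined by its singleton values $\mu(\omega)\ge 0$, with $\mu(A)=\sum_{\omega\in A}\mu(\omega)$. If at most one singleton had positive mass, say $\mu(\omega_0)=c$, then $\mu=c\delta_{\omega_0}$ would be a Dirac measure (the case $\mu\equiv 0$ being $0\cdot\delta_{\omega_0}$). Hence ``$\mu$ is not a Dirac measure'' yields two distinct points $\omega_1,\omega_2\in\Omega$ with $\mu(\omega_1)>0$ and $\mu(\omega_2)>0$, and after relabelling I may assume $\mu(\omega_1)\le\mu(\omega_2)$.

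Next, suppose toward a contradiction that $\mu$ 1-generates a coevent $\phi$ with strictly positive density $f$. By Lemma~\ref{lem55}(a), since $\mu(\omega_i)\ne 0$ we get $\phi(\omega_i)=1$ and $f(\omega_i)=\mu(\omega_i)$ for $i=1,2$; in particular $0<f(\omega_1)\le f(\omega_2)$, so the doubleton case of~\eqref{eq36} applies and gives
\begin{align*}
\int_{\{\omega_1,\omega_2\}}f\,d\phi
  &=f(\omega_1)\phi(\{\omega_1,\omega_2\})+[f(\omega_2)-f(\omega_1)]\phi(\omega_2)\\
  &=\mu(\omega_1)\phi(\{\omega_1,\omega_2\})+\mu(\omega_2)-\mu(\omega_1).
\end{align*}
Finally I would equate this with $\mu(\{\omega_1,\omega_2\})=\mu(\omega_1)+\mu(\omega_2)$, which holds by grade-1 additivity. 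The $1$-generation identity then forces $\mu(\omega_1)\phi(\{\omega_1,\omega_2\})=2\mu(\omega_1)$, and since $\mu(\omega_1)>0$ this gives $\phi(\{\omega_1,\omega_2\})=2$, impossible for a $\integers_2$-valued coevent. This contradiction shows $\mu$ is not 1-generating.

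There is no deep obstacle here: once Lemma~\ref{lem55}(a) supplies the exact values of $\phi$ and $f$ at the two chosen points, the rest is a one-line computation. The only step requiring genuine care is the first one, namely correctly translating ``not a Dirac measure $c\delta_\omega$'' into ``the support of $\mu$ has at least two points'' (including the convention $0=0\cdot\delta_\omega$), and ensuring the ordering $f(\omega_1)\le f(\omega_2)$ so that~\eqref{eq36} may be invoked in the stated form.
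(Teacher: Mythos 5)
Your proof is correct and follows essentially the same route as the paper: both extract two points of positive mass from the non-Dirac hypothesis, apply Lemma~\ref{lem55}(a) to pin down $f(\omega_i)=\mu(\omega_i)$ and $\phi(\omega_i)=1$, and then use~\eqref{eq36} on the doubleton together with additivity of $\mu$ to force the impossible value $\phi(\{\omega_1,\omega_2\})=2$. The only difference is expository: you spell out the reduction of ``not a Dirac measure'' to ``at least two atoms of positive mass,'' which the paper leaves implicit.
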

\begin{proof}
Since $\mu$ is not a Dirac measure, there exist $\omega _1,\omega _2\in\Omega$ such that
$\mu (\omega _1),\mu (\omega _2)>0$. Suppose $\mu$ 1-generates $\phi\in\ascript ^*$ so that
$\mu (A)=\int _Afd\phi$ for all $A\in\ascript$ where $f\colon\Omega\to\real$ is strictly positive. It follows from
Lemma~\ref{lem55} that $f(\omega _i)=\mu (\omega _i)$ and $\phi (\omega _i)=1$, $i=1,2$. We can assume without loss of generality that $f(\omega _1)\le f(\omega _2)$. Since $\mu$ is a measure, we have
\begin{align*}
\mu (\omega _1)+\mu (\omega _2)&=\mu\paren{\brac{\omega _1,\omega _2}}\\
   &=\int _{\brac{\omega _1,\omega _2}}\!\!\!\!\!\!fd\phi =f(\omega _1)\phi\paren{\brac{\omega _1,\omega _2}}
   +\sqbrac{f(\omega _2)-f(\omega _1)}\phi (\omega _2)\\
   &=\mu (\omega _1)\phi\paren{\brac{\omega _1,\omega _2}}+\mu (\omega _2)-\mu (\omega _1)
\end{align*}
Hence, $\phi\paren{\brac{\omega _1,\omega _2}}=2$ which is a contradiction.
\end{proof}

In the sequel, we shall use the notation
\begin{equation*}
g_A(\omega ')=\int _Af(\omega ,\omega ')d\phi (\omega )
\end{equation*}
Of course, $g_A$ depends on $f$ and $\phi$ but these will be known by context.

\begin{thm}       % Theorem 5.7
\label{thm57}
{\rm (a)}\enspace Any measure of the form $\mu =a_1\delta _{\omega _1}+a_2\delta _{\omega _2}$, $a_1,a_2>0$, 2-generates the coevent $\phi =\omega _1^*\oplus\omega _2^*\oplus\omega _1^*\omega _2^*$. In particular, any measure on $\ascript _2$ 2-generates $\omega ^*\in\ascript _2^*$ or $1\in\ascript _2^*$.
{\rm (b)}\enspace If $\mu$ is a measure on $\ascript$ that is not of the form
$a_1\delta _{\omega _1}+a_2\delta _{\omega _2}$, then $\mu$ is not 2-generating.
\end{thm}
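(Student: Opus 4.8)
The plan is to prove (a) by an explicit construction and (b) by contradiction via Lemma~\ref{lem55}.

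For (a), the first step is to recognize that $\phi=\omega_1^*\oplus\omega_2^*\oplus\omega_1^*\omega_2^*$ is the coevent $\brac{\omega_1,\omega_2}^*$: by Theorem~\ref{thm22}(a), $\phi(B)=1$ exactly when $B\cap\brac{\omega_1,\omega_2}\ne\emptyset$. Writing $C=\brac{\omega_1,\omega_2}$ and invoking the Section~3 formula $\int_AhdC^*=\max\brac{h(\omega)\colon\omega\in A\cap C}$, the inner integral becomes $g_A(\omega')=\max\brac{f(\omega,\omega')\colon\omega\in A\cap C}$ and the full double integral becomes $\max\brac{f(\omega,\omega')\colon\omega,\omega'\in A\cap C}$. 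It then remains to pick a strictly positive symmetric $f$ realizing $\mu(A)$ in the four cases $A\cap C\in\brac{\emptyset,\brac{\omega_1},\brac{\omega_2},C}$; I would take $f(\omega_1,\omega_1)=a_1$, $f(\omega_2,\omega_2)=a_2$, $f(\omega_1,\omega_2)=f(\omega_2,\omega_1)=a_1+a_2$ and positive values elsewhere, the only case needing care being $A\supseteq C$, where the maximum is $\max\brac{a_1,a_2,a_1+a_2}=a_1+a_2=\mu(A)$. For the ``in particular'' clause, on $\ascript_2$ a measure with both masses positive has $C=\Omega_2$, so $\phi=\Omega^*=1$, while a scaled Dirac $a\delta_\omega$ 2-generates $\omega^*$, as one checks using $\int_Ahd\omega^*=h(\omega)\chi_A(\omega)$ with density $f(\omega,\omega)=a$.

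For (b), since a measure supported on at most two points is of the excluded form, the hypothesis supplies three points $\omega_1,\omega_2,\omega_3$ with $a_i:=\mu(\omega_i)>0$, which I order so that $a_1\le a_2\le a_3$. Assuming $\mu$ 2-generates $\phi$ with strictly positive symmetric density $f$, Lemma~\ref{lem55}(b) gives $\phi(\omega_i)=1$ and $f(\omega_i,\omega_i)=a_i$. The key simplification is that, because every singleton in play carries $\phi=1$, formula~\eqref{eq36} reduces $\int_Ahd\phi$ to $\max_Ah$ when $\phi(A)=1$ and to $\max_Ah-\min_Ah$ when $\phi(A)=0$, for $A$ a doubleton or (after the next step) a triple. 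Running the 2-generation identity on a pair $\brac{\omega_i,\omega_j}$: if $\phi(\brac{\omega_i,\omega_j})=0$ the double integral equals $\ab{\,\ab{a_i-x}-\ab{x-a_j}\,}$ with $x=f(\omega_i,\omega_j)$, which a short case check shows is $<a_i+a_j$ for every $x>0$; hence $\phi(\brac{\omega_i,\omega_j})=1$ is forced, and then the double integral is $\max\brac{a_i,a_j,x}$, matching $\mu(\brac{\omega_i,\omega_j})=a_i+a_j$ only if $f(\omega_i,\omega_j)=a_i+a_j$.

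With $f(\omega_i,\omega_j)=a_i+a_j$ established for all pairs, the final step evaluates the double integral over $T=\brac{\omega_1,\omega_2,\omega_3}$. When $\phi(T)=1$ the inner integrals give $g_T(\omega_1)=a_1+a_3$ and $g_T(\omega_2)=g_T(\omega_3)=a_2+a_3$, so the outer maximum is $a_2+a_3$; when $\phi(T)=0$ they give $g_T(\omega_1)=g_T(\omega_2)=a_3$, $g_T(\omega_3)=a_2$, so the outer max-minus-min is $a_3-a_2$. Since $a_1>0$, both values are strictly below $\mu(T)=a_1+a_2+a_3$, the contradiction that proves (b). I expect the doubleton step to be the main obstacle: the inequality $\ab{\,\ab{a_i-x}-\ab{x-a_j}\,}<a_i+a_j$ is what forbids $\phi$ from vanishing on a positive-mass pair and simultaneously pins down $f(\omega_i,\omega_j)$, and everything in the tripleton computation rests on having that value exactly. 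The one piece of bookkeeping to be careful about is justifying the $\max$ and $\max-\min$ reductions from~\eqref{eq36}, which needs all singletons and all pairs to carry $\phi=1$; granted that, the tripleton evaluation is routine and unaffected by the ties among the $g_T$ values.
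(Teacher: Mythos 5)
Your proposal is correct and follows essentially the same route as the paper: in (a) you use the identical density ($f(\omega _i,\omega _i)=a_i$, $f(\omega _1,\omega _2)=a_1+a_2$), merely evaluating the double integral via $\int _Af\,dB^*=\max\brac{f(\omega )\colon\omega\in A\cap B}$ instead of by direct case computation, and in (b) you follow the paper's skeleton exactly — Lemma~\ref{lem55} pins down the diagonal values and singletons, the doubleton step forces $\phi\paren{\brac{\omega _i,\omega _j}}=1$ and $f(\omega _i,\omega _j)=\mu (\omega _i)+\mu (\omega _j)$ (your max/max-minus-min reduction and triangle-inequality argument compress the paper's three-case analysis of where $f(\omega _1,\omega _2)$ falls), and the tripleton evaluation yields the same contradiction. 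Incidentally, your value $g_T(\omega _1)=a_1\phi (T)+a_3$ corrects a small slip in the paper, which states $g_A(\omega _1)=\mu (\omega _1)a+\mu (\omega _2)$ rather than $\mu (\omega _1)a+\mu (\omega _3)$; the final contradiction is unaffected.
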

\begin{proof}
(a)\enspace Define the strictly positive, symmetric function $f$ on $\Omega\times\Omega$ by
$f(\omega _i,\omega _i)=a_i$, $i=1,2$,
\begin{equation*}
f(\omega _1,\omega _2)=f(\omega _2,\omega _1)=a_1+a_2
\end{equation*}
and $f(\omega _i,\omega _j)=M$ otherwise, where $M>a_1+a_2$. For $A\in\ascript$, if
$\omega _1,\omega _2\notin A$, then $g_A(\omega ')=M\phi (A)=0$. Hence,
\begin{equation*}
\mu '(A)=\int _Ag(\omega ')d\phi (\omega ')=0
\end{equation*}
If $\omega _1\in A$, $\omega _2\notin A$, then
\begin{equation*}
g_A(\omega _1)=\int _Af(\omega ,\omega _1)d\phi (\omega )=a_1
\end{equation*}
and for $\omega '\ne\omega _1$ we have
\begin{equation*}
g_A(\omega ')=\int _Af(\omega ,\omega ')d\phi (\omega )=0
\end{equation*}
Hence,
\begin{equation*}
\mu '(A)=\int _Ag_A(\omega ')d\phi (\omega ')=a_1
\end{equation*}
Similarly, if $\omega _2\in A$, $\omega _1\notin A$, then $\mu '(A)=a_2$. If $\omega _1,\omega _2\in A$, then
\begin{align*}
g_A(\omega _1)&=\int _Af(\omega ,\omega _1)d\phi (\omega )=a_1+a_2\\
g_A(\omega _2)&=\int _Af(\omega ,\omega _2)d\phi (\omega )=a_2+a_1
\end{align*}
If $\omega '\ne\omega _1,\omega _2$, then
\begin{equation*}
g_A(\omega ')=\int _Af(\omega ,\omega ')d\phi (\omega ')=M
\end{equation*}
Hence,
\begin{equation*}
\mu '(A)=\int _Ag_A(\omega ')d\phi (\omega ')=a_1+a_2
\end{equation*}
We conclude that
\begin{equation*}
\mu (A)=\mu (A')=\int _A\sqbrac{\int _Af(\omega ,\omega ')d\phi (\omega )}d\phi (\omega ')
\end{equation*}
so $\mu$ 2-generates $\phi$.\newline
(b)\enspace Since $\mu$ is not of the form $a_1\delta _{\omega _1}+a_2\delta _{\omega _2}$, there exists
$\omega _1,\omega _2,\omega _3\in\Omega$ such that $\mu (\omega _1),\mu (\omega _2),\mu (\omega _3)>0$. We can assume without loss of generality that $\mu (\omega _1)\le\mu (\omega _2)\le\mu (\omega _3)$. Suppose $\mu$ 2-generates $\phi\in\ascript ^*$ with density $f$. It follows from Lemma~\ref{lem55} that
$f(\omega _i,\omega _i)=\mu (\omega _i)$ and that $\phi (\omega _i)=1$, $i=1,2,3$. We now have three cases.
\newline
\textbf{Case 1.}\enspace $\mu (\omega _1)\le\omega _2)\le f(\omega _1,\omega _2)$. Letting
$A=\brac{\omega _1,\omega _2}$, $\phi\paren{\brac{\omega _1,\omega _2}}=a$ we have that
\begin{align*}
g_A(\omega _1)&=\int _Af(\omega ,\omega _1)d\phi (\omega )=\mu (\omega _1)a
  +f(\omega _1,\omega _2)-\mu (\omega _1)\\
g_A(\omega _2)&=\int _Af(\omega ,\omega _2)d\phi (\omega )=\mu (\omega _2)a
  +f(\omega _1,\omega _2)-\mu (\omega _2)
\end{align*}
Since $\mu$ is a measure, we have
\begin{align*}
\mu (\omega _1)+&\mu (\omega _2)=\mu (A)=\int _Ag_A(\omega ')d\phi (\omega ')\\
  &=\sqbrac{\mu (\omega _2)(a-1)+f(\omega _1,\omega _2)}a+\sqbrac{\mu (\omega _1)-\mu (\omega _2)}a
  +\mu (\omega _2)-\mu (\omega _1)
\end{align*}
If $a=0$, then
\begin{equation*}
\mu (\omega _1)+\mu (\omega _2)=\mu (\omega _2)-\mu (\omega _1)
\end{equation*}
which is a contradiction. If $a=1$, then $f(\omega _1,\omega _2)=\mu (\omega _1)+\mu (\omega _2)$.

\medskip
\noindent\textbf{Case 2.}\enspace $f(\omega _1,\omega _2)\le\mu (\omega _1)\le\mu (\omega _2)$. With the same terminology as in Case~1, we have
\begin{align*}
g_A(\omega _1)&=\int _Af(\omega ,\omega _1)d\phi (\omega )=f(\omega _1,\omega _2)a
  +\mu (\omega _1)-f(\omega _1,\omega _2)\\
g_A(\omega _2)&=\int _Af(\omega ,\omega _2)d\phi (\omega )=f(\omega _1,\omega _2)a
  +\mu (\omega _2)-f(\omega _1,\omega _2)
\end{align*}
Since $\mu$ is a measure, we have
\begin{align*}
\mu (\omega _1)+\mu (\omega _2)&=\mu (A)=\int _Ag_A(\omega ')d\phi (\omega ')\\
  &=\sqbrac{f(\omega _1,\omega _2)(a-1)+\mu (\omega _1)}a+\mu (\omega _2)-\mu (\omega _1)\\
  &=\mu (\omega _2)+(a-1)\mu (\omega _1)
\end{align*}
If $a=0$, then $\mu (\omega _1)+\mu (\omega _2)=\mu (\omega _2)-\mu (\omega _1)$ which is a contradiction. If $a=1$, then $\mu (\omega _1)+\mu (\omega _2)=\mu (\omega _2)$ which is a contradiction.

\medskip
\noindent\textbf{Case 3.}\enspace $\mu (\omega _1)\le f(\omega _1,\omega _2)\le\mu (\omega _2)$. Again, with the same notation as before, we have
\begin{align*}
g_A(\omega _1)&=\mu (\omega _1)a+f(\omega _1,\omega _2)-\mu (\omega _1)\\
g_A(\omega _2)&=f(\omega _1,\omega _2)a+\mu (\omega _2)-f(\omega _1,\omega _2)
\end{align*}
If $a=1$, since $\mu$ is a measure we have
\begin{align*}\mu (\omega _1)+\mu (\omega _2)&=\mu (A)=\int _Ag_A(\omega ')d\phi (\omega ')\\
  &=f(\omega _1,\omega _2)+\mu (\omega _2)-f(\omega _1,\omega _2)=\mu (\omega _2)
\end{align*}
which is a contradiction. If $a=0$, we have two subcases. If
\begin{equation*}
f(\omega _1,\omega _2)-\mu (\omega _1)\le\mu (\omega _2)-f(\omega _1,\omega _2)
\end{equation*}
then
\begin{equation*}
\mu (\omega _1)+\mu (\omega _2)=\mu (A)=\int _Ag_A(\omega ')d\phi (\omega ')
  =\mu (\omega _2)-f(\omega _1,\omega _2)
\end{equation*}
which is a contradiction. If
\begin{equation*}
\mu (\omega _2)-f(\omega _1,\omega _2)\le f(\omega _1,\omega _2)-\mu (\omega _1)
\end{equation*}
then
\begin{align*}
\mu (\omega _1)+\mu (\omega _2)&=\mu (A)=\int _Ag_A(\omega ')d\phi (\omega ')
  =f(\omega _1,\omega _2)-\mu (\omega _1)\\
  &\le\mu (\omega _2)-\mu (\omega _1)
\end{align*}
which is a contradiction.

Since Case~1 with $a=1$ is the only noncontradiction, we conclude that $\phi (\omega _i)=1$,
$f(\omega _i,\omega _i)=\mu (\omega _i)$, $i=1,2,3$, $\phi\paren{\brac{\omega _i,\omega _j}}=1$,
$f(\omega _i,\omega _j)=\mu (\omega _i)+\mu (\omega _j)$, $i<j=1,2,3$. Letting
$A=\brac{\omega _1,\omega _2,\omega _3}$ and $a=\phi\paren{\brac{\omega _1,\omega _2,\omega _3}}$ we obtain
\begin{align*}
g_A(\omega _1)&=\int _Af(\omega ,\omega _1)d\phi (\omega )=\mu (\omega _1)a+\mu (\omega _2)\\
g_A(\omega _2)&=\int _Af(\omega ,\omega _2)d\phi (\omega )=\mu (\omega _2)a+\mu (\omega _3)\\
g_A(\omega _3)&=\int _Af(\omega ,\omega _3)d\phi (\omega )=\mu (\omega _3)a+\mu (\omega _2)
\end{align*}
If $a=0$, since $\mu$ is a measure we have
\begin{equation*}
\mu (\omega _1)+\mu (\omega _2)+\mu (\omega _3)=\mu (A)=\int _Ag_A(\omega ')d\phi (\omega ')
  =\mu (\omega _3)-\mu (\omega _2)
\end{equation*}
which is a contradiction. If $a=1$, we obtain
\begin{equation*}
\mu (\omega _1)+\mu (\omega _2)+\mu (\omega _3)=\int _Ag_A(\omega ')d\phi (\omega ')
  =\mu (\omega _3)+\mu (\omega _2)
\end{equation*}
which is again a contradiction. Since every case leads to a contradiction, $\mu$ is not 2-generating.
\end{proof}

\section{1-Generation} % Section 6
This section mainly considers $q$-measures and their 1-generated coevents in $\ascript _2$ and $\ascript _3$. We begin by showing that only a very restricted set of $q$-measures 1-generate coevents.

\begin{lem}       % Lemma 6.1
\label{lem61}
Let $\mu$ be a $q$-measure on $\ascript$ that 1-generates a coevent $\phi\in\ascript ^*$. If
$\omega _1,\omega _2\in\Omega$ with $0<\mu (\omega _1)\le\mu (\omega _2)$, then
$\mu\paren{\brac{\omega _1,\omega _2}}=\mu (\omega _2)-\mu (\omega _1)$ or
$\mu\paren{\brac{\omega _1,\omega _2}}=\mu (\omega _2)$. Moreover,
$\mu\paren{\brac{\omega _1,\omega _2}}=\mu (\omega _2)-\mu (\omega _1)$ if and only if
$\phi\paren{\brac{\omega _1,\omega _2}}=0$ and $\mu\paren{\brac{\omega _1,\omega _2}}=\mu (\omega _2)$ if and only if $\phi\paren{\brac{\omega _1,\omega _2}}=1$.
\end{lem}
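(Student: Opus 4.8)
The plan is to reduce the whole statement to a single application of the two-point integral formula \eqref{eq36}, after using Lemma~\ref{lem55} to pin down the density values and the singleton truth values. Since $0<\mu(\omega_1)\le\mu(\omega_2)$, both $\mu(\omega_1)$ and $\mu(\omega_2)$ are nonzero, so Lemma~\ref{lem55}(a) gives $\phi(\omega_1)=\phi(\omega_2)=1$ together with $f(\omega_1)=\mu(\omega_1)$ and $f(\omega_2)=\mu(\omega_2)$. In particular the inequality $\mu(\omega_1)\le\mu(\omega_2)$ transfers to the density as $f(\omega_1)\le f(\omega_2)$, which is exactly the ordering hypothesis needed to apply \eqref{eq36} to the set $\brac{\omega_1,\omega_2}$.

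First I would compute $\mu\paren{\brac{\omega_1,\omega_2}}=\int_{\brac{\omega_1,\omega_2}}fd\phi$ directly from \eqref{eq36}. Using $\phi(\omega_2)=1$ together with the density values above, this should yield
\begin{equation*}
\mu\paren{\brac{\omega_1,\omega_2}}=f(\omega_1)\phi\paren{\brac{\omega_1,\omega_2}}+f(\omega_2)-f(\omega_1)
=\mu(\omega_1)\phi\paren{\brac{\omega_1,\omega_2}}+\mu(\omega_2)-\mu(\omega_1).
\end{equation*}
Because $\phi$ is $\integers_2$-valued, $\phi\paren{\brac{\omega_1,\omega_2}}$ is either $0$ or $1$. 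Substituting $0$ gives $\mu\paren{\brac{\omega_1,\omega_2}}=\mu(\omega_2)-\mu(\omega_1)$, and substituting $1$ gives $\mu\paren{\brac{\omega_1,\omega_2}}=\mu(\omega_2)$. This simultaneously establishes the dichotomy asserted in the first sentence and the two forward implications (from the value of $\phi$ to the value of $\mu$).

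To upgrade the forward implications to the stated equivalences I would observe that the two candidate values $\mu(\omega_2)-\mu(\omega_1)$ and $\mu(\omega_2)$ are \emph{distinct}, precisely because $\mu(\omega_1)>0$. Hence $\mu\paren{\brac{\omega_1,\omega_2}}$ equals exactly one of them, and knowing which value $\mu$ realizes forces the corresponding value of $\phi\paren{\brac{\omega_1,\omega_2}}$, giving both converse directions at once. This strict positivity of $\mu(\omega_1)$ is the only step to watch: it is what separates the two branches, and without it the biconditionals would collapse. Everything else is a one-line substitution into \eqref{eq36} layered on top of the bookkeeping supplied by Lemma~\ref{lem55}, so I anticipate no real obstacle beyond confirming the ordering $f(\omega_1)\le f(\omega_2)$ that legitimizes the use of \eqref{eq36}.
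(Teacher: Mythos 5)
Your proof is correct and takes essentially the same route as the paper: both invoke Lemma~\ref{lem55} to get $f(\omega _i)=\mu (\omega _i)$ and $\phi (\omega _2)=1$, then substitute into the two-point formula \eqref{eq36} to obtain $\mu\paren{\brac{\omega _1,\omega _2}}=\mu (\omega _1)\phi\paren{\brac{\omega _1,\omega _2}}+\mu (\omega _2)-\mu (\omega _1)$, from which the dichotomy and the biconditionals follow. Your explicit observation that $\mu (\omega _1)>0$ makes the two candidate values distinct is precisely the justification the paper leaves implicit in its closing ``It follows that.''
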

\begin{proof}
Let $f$ be a $\phi$-density for $\mu$. Since $0<\mu (\omega _1)\le\mu (\omega _2)$, by Lemma~\ref{lem55} we have that
\begin{equation*}
f(\omega _1)=\mu (\omega _1)\le\mu (\omega _2)=f(\omega _2)
\end{equation*}
and $\phi (\omega _2)=1$. Hence,
\begin{align*}
\mu\paren{\brac{\omega _1,\omega _2}}&=\int _{\brac{\omega _1,\omega _2}}fd\phi
  =f(\omega _1)\phi\paren{\brac{\omega _1,\omega _2}}
  +\sqbrac{f(\omega _2)-f(\omega _1)}\phi\paren{\brac{\omega _2}}\\
  &=\mu (\omega _1)\phi\paren{\brac{\omega _1,\omega _2}}+\mu (\omega _2)-\mu (\omega _1)
\end{align*}
It follows that $\mu\paren{\brac{\omega _1,\omega _2}}=\mu (\omega _2)-\mu (\omega _1)$ if and only if
$\phi\paren{\brac{\omega _1\omega _2}}=0$ and $\mu\paren{\brac{\omega _1,\omega _2}}=\mu (\omega _2)$
if and only if $\phi\paren{\brac{\omega _1,\omega _2}}=1$.
\end{proof}

\begin{exam}{7}    % Example 7.
This example considers $q$-measures and 1-generated coevents on $\Omega _2=\brac{\omega _1,\omega _2}$. The zero measure 1-generates $0\in\ascript _2^*$ and Dirac measures $c\delta _{\omega _i}$ 1-generate
$\omega _i^*$, $i=1,2$. For the type $(1,2)$ coevent $\phi=\omega _1^*\oplus\omega _2^*$, let
$f\colon\Omega _2\to\real$ be strictly positive and define $\mu (A)=\int _Afd\phi$, $A\in\ascript _2$. Assuming that
$f(\omega _1)\le f(\omega _2)$ we have that $\mu (\omega _1)=f(\omega _1)$, $\mu (\omega _2)=f(\omega _2)$ and $\mu (\Omega _2)=f(\omega _2)-f(\omega _1)$. Hence, any $q$-measure on $\ascript _2$ that satisfies
$0<\mu (\omega _1)\le\mu (\omega _2)$, $\mu (\Omega _2)=\mu (\omega _2)-\mu (\omega _1)$ 1-generates the coevent $\omega _1^*\oplus\omega _2^*$. The density is given by $f(\omega _1)=\mu (\omega _1)$,
$f(\omega _2)=\mu (\omega _2)$ and is unique.

For the type $(12)$ coevent $\phi =\omega _1^*\omega _2^*$, let $f\colon\Omega _2\to\real$,
$\mu (A)=\int _Afd\phi$ as before. Assuming $f(\omega _1)\le f(\omega _2)$, we have that
$\mu (\omega _1)=\mu (\omega _2)=0$ and $\mu (\Omega _2)=f(\omega _1)$. Notice that $\mu$ is not regular. Hence, any $q$-measure $\mu$ on $\ascript _2$ that satisfies $\mu (\omega _1)=\mu (\omega _2)=0$ 1-generates the coevent $\omega _1^*\omega _2^*$. The density is given by $f(\omega _2)\ge f(\omega _1)=\mu (\Omega _2)$ but otherwise is arbitrary. In this case, the density is not unique.

For the type $(1,12)$ coevent $\phi =\omega _1^*\oplus\omega _1^*\omega _2^*$, let
$f\colon\Omega _2\to\real$, $\mu (A)=\int _Ad\phi$ as before. Assuming $f(\omega _2)\le f(\omega _1)$ we have that
$\mu (\omega _1)=f(\omega _1)$, $\mu (\omega _2)=0$ and $\mu (\Omega _2)=\mu (\omega _1)-f(\omega _2)$. Notice that $\mu$ is not regular. Hence, any $q$-measure on $\ascript _2$ that satisfies $\mu (\omega _1)>0$,
$\mu (\omega _2)=0$, $\mu (\Omega _2)\le\mu (\omega _1)$ 1-generates the coevent
$\omega _1^*\oplus\omega _1^*\omega _2^*$. The density is given by $f(\omega _1)=\mu (\omega _1)$,
$f(\omega _2)=\mu (\omega _1)-\mu (\Omega _2)$ and is unique.

For the type $(1,2,12)$ coevent 
\begin{equation*}
\Omega _2^*=1=\omega _1^*\oplus\omega _2^*\oplus\omega _1^*\omega _2^*
\end{equation*}
let $f\colon\Omega _2\to\real$, $\mu (A)=\int _Afd\phi$ as before. Assuming $f(\omega _1)\le f(\omega _2)$ we have that $\mu (\omega _1)=f(\omega _1)$, $\mu (\omega _2)=f(\omega _2)$ and $\mu (\Omega _2)=f(\omega _2)$. Hence, any $q$-measure $\mu$ on $\ascript _2$ that satisfies
\begin{equation*}
0<\mu (\omega _1)\le\mu (\omega _2)=\mu (\Omega _2)
\end{equation*}
1-generates the coevent $1$ and the unique density is
$f(\omega _1)=\mu (\omega _1)$, $f(\omega _2)=\mu (\omega _2)$.\hfill\qedsymbol
\end{exam}
\medskip

Examining all the cases in Example~7 shows that every coevent $\phi\in\ascript _2^*$ is 1-generated and when a
$q$-measure on $\ascript _2$ 1-generates a coevent $\phi$, then $\phi$ is unique.

\begin{exam}{8}    % Example 8.
There are too many coevents in $\ascript _3^*$ to consider them all so we give some examples in $\ascript _3^*$ that are 1-generated and some that are not 1-generated. It follows from Theorem~\ref{thm54} that
$\omega _i^*$, $i=1,2,3$, $\omega _i^*\oplus\omega _j^*$ and
$\omega _i^*\oplus\omega _j^*\oplus\omega _i^*\omega _j^*$, $i<j=1,2,3$, are 1-generated. We now describe a set of 1-generated coevents in $\ascript _3^*$ that include these nine coevents. Since $\phi (A)=\int _Ad\phi$, if
$\phi\colon\ascript\to\brac{0,1}$ happens to be $q$-measure, then $\phi$ 1-generates itself with density $f=1$. In general, $\phi$ may not be regular. We now describe the 34 $q$-measures in $\ascript _3^*$. List the nonempty subsets of $\Omega _3$ in the order $A_i$, $i=1,\ldots ,7$ as follows
\begin{equation*}
\brac{\omega _1},\brac{\omega _2},\brac{\omega _3},\brac{\omega _1,\omega _2},\brac{\omega _1,\omega _3}
\brac{\omega _2,\omega _3},\Omega _3
\end{equation*}
We can represent a coevent by $\phi _{a_1\cdots a_7}$ where $a_i\in\brac{0,1}$ are not all zero and $a_i=1$ if and only if $\phi (A_i)=1$. Now $\phi _{a_1\cdots a_7}$ is a $q$-measure if and only if
$a_7=a_4+a_5+a_6-a_1-a_2-a_3$. If we count these according to the number of ones in $\brac{a_1,a_2,a_3}$ we obtain $3\cdot 4+3\cdot 6+4=34$ $q$-measures. Examples are
\begin{align*}
&\omega _1^*=\phi _{1001101},\quad\omega _1^*\oplus\omega _2^*=\phi _{1100110},\quad
(\omega _1^*\omega _2^*\omega _3^*)'=\phi _{1111110}\\
&\omega _2^*\oplus\omega _1^*\omega _2^*\oplus\omega _1^*\omega _3^*=\phi _{0100111}
\end{align*}
For instance, $\phi =(\omega _1^*\omega _2^*\omega _3^*)'$ is 1-generated by itself. Moreover, $\phi$ is
1-generated by any $q$-measure $\mu$ satisfying $\mu (\omega _i)>0$, $i=1,2,3$,
$\mu\paren{\brac{\omega _i,\omega _j}}=\max\paren{\mu (\omega _i),\mu (\omega _j)}$, $i<j=1,2,3$. The density is
$f(\omega _i)=\mu (\omega _i)$.

We now give examples of coevents in $\ascript _3^*$ that are not 1-generated. First
$\phi =\omega _1^*\omega _2^*\omega _3^*$ is not 1-generated. Suppose $\mu$ is a $q$-measure on 
$\ascript _3$ and $\mu (A)=\int _Afd\phi$ where $f\colon\Omega _3\to\real$ is strictly positive. We can assume without loss of generality that $0<f(\omega _1)\le f(\omega _2)\le f(\omega _3)$. By Lemma~\ref{lem55} we have that $\mu (\omega _i)=0$, $i=1,2,3$. Moreover, for $i<j=1,2,3$ we obtain
\begin{equation*}
\mu\paren{\brac{\omega _i,\omega _j}}=\int _{\brac{\omega _i,\omega _j}}fd\phi =0
\end{equation*}
and
\begin{equation*}
\mu (\Omega _3)=\int _{\Omega _3}fd\phi =f(\omega _1)>0
\end{equation*}
Since $\mu$ is a $q$-measure we conclude that
\begin{align*}
\mu (\Omega _3)&=\mu\paren{\brac{\omega _1,\omega _2}}+\mu\paren{\brac{\omega _1,\omega _3}}
  +\mu\paren{\brac{\omega _2,\omega _3}}-\mu (\omega _1)-\mu (\omega _2)-\mu (\omega _3)\\
  &=0
\end{align*}
which is a contradiction.

We next show that $\phi =\omega _1^*\oplus\omega _2^*\oplus\omega _3^*$ is not 1-generated. Suppose $\mu$ is a $q$-measure on $\ascript _3$ and $\mu (A)=\int _Afd\phi$ where $f\colon\Omega _3\to\real$ is strictly positive. We can assume that $0<f(\omega _1)\le f(\omega _2)\le f(\omega _3)$. By Lemma~\ref{lem55} we have
$f(\omega _i)=\mu (\omega _i)$, $i=1,2,3$. Now
\begin{equation*}
\mu\paren{\brac{\omega _1,\omega _2}}=\int _{\brac{\omega _1,\omega _2}}fd\phi
  =f(\omega _2)-f(\omega _1)=\mu (\omega _2)-\mu (\omega _1)
\end{equation*}
and similarly, $\mu\paren{\brac{\omega _1,\omega _3}}=\mu (\omega _3)-\mu (\omega _1)$,
$\mu\paren{\brac{\omega _2,\omega _3}}=\mu (\omega _3)-\mu (\omega _2)$. We also have that
\begin{equation*}
\mu (\Omega _3)=\int _{\Omega _3}fd\phi =f(\omega _1)+f(\omega _3)-f(\omega _2)
  =\mu (\omega _3)-\mu (\omega _2)+\mu (\omega _1)
\end{equation*}
Since $\mu$ is a $q$-measure, we obtain
\begin{align*}
\mu (\omega _3)&-\mu (\omega _2)+\mu (\omega _1)=\mu (\Omega _3)\\
  &=\mu\paren{\brac{\omega _1,\omega _2}}+\mu\paren{\brac{\omega _1,\omega _3}}
  +\mu\paren{\brac{\omega _2,\omega _3}}-\mu (\omega _1)-\mu (\omega _2)-\mu (\omega _3)\\
  &=\mu (\omega _3)-\mu (\omega _2)-3\mu (\omega _1)
\end{align*}
Since this gives a contradiction, $\phi$ is not 1-generated.

Finally, we show that $\phi =\omega _1^*\oplus\omega _2^*\oplus\omega _3^*\oplus\omega _1^*\omega _2^*$ is not 1-generated. Suppose $\mu$ is a $q$-measure $\ascript _3$ and $\mu (A)=\int _Afd\mu$ where
$f\colon\Omega _3\to\real$ is strictly positive. By Lemma~\ref{lem55}, $f(\omega _i)=\mu (\omega _i)$, $i=1,2,3$. We now have three cases.\newline
\textbf{Case 1.}\enspace $0<f(\omega _1)\le f(\omega _2)\le f(\omega _3)$. We obtain
\begin{align*}
\mu\paren{\brac{\omega _1,\omega _2}}&=\int _{\brac{\omega _1,\omega _2}}fd\phi =\mu (\omega _2)\\
\mu\paren{\brac{\omega _1,\omega _3}}&=\int _{\brac{\omega _1,\omega _3}}fd\phi
  =\mu (\omega _3)-\mu (\omega _1)\\
\mu\paren{\brac{\omega _2,\omega _3}}&=\int _{\brac{\omega _2,\omega _3}}fd\phi
  =\mu (\omega _3)-\mu (\omega _2)\\
  \mu (\Omega _3)&=\int _{\Omega _3}fd\phi =\mu (\omega _3)-\mu (\omega _2)
\end{align*}
Since $\mu$ is a $q$-measure, we have
\begin{equation*}
\mu (\omega _3)-\mu (\omega _2)=\mu (\Omega _3)=\mu (\omega _3)-\mu (\omega _2)-2\mu (\omega _1)
\end{equation*}
which is a contradiction.\newline
\textbf{Case 2.}\enspace $0<f(\omega _3)\le f(\omega _1)\le f(\omega _2)$. We obtain
\begin{align*}
\mu\paren{\brac{\omega _1,\omega _2}}&=\int _{\brac{\omega _1,\omega _2}}fd\phi =\mu (\omega _2)\\
\mu\paren{\brac{\omega _1,\omega _3}}&=\int _{\brac{\omega _2,\omega _3}}fd\phi
  =\mu (\omega _1)-\mu (\omega _3)\\
\mu\paren{\brac{\omega _2,\omega _3}}&=\int _{\brac{\omega _2,\omega _3}}fd\phi
  =\mu (\omega _2)-\mu (\omega _3)\\
\mu (\Omega _3)&=\int _{\Omega _3}fd\phi =\mu (\omega _2)-\mu (\omega _3)
\end{align*}
Since $\mu$ is a $q$-measure, we have
\begin{equation*}
\mu (\omega _2)-\mu (\omega _3)=\mu (\Omega _3)=\mu (\omega _2)-3\mu (\omega _3)
\end{equation*}
which is a contradiction.\newline
\textbf{Case 3.}\enspace 
$0<f(\omega _1)\le f(\omega _3)\le f(\omega _2)$. In a similar way as before, we obtain
$\mu\paren{\brac{\omega _1,\omega _2}}=\mu (\omega _2)$,
$\mu\paren{\brac{\omega _1,\omega _3}}=\mu (\omega _3)-\mu (\omega _1)$,
$\mu\paren{\brac{\omega _2,\omega _3}}=\mu (\omega _2)-\mu (\omega _3)$ and
$\mu (\Omega _3)=\mu (\omega _2)-\mu (\omega _3)$. Since $\mu$ is a $q$-measure, we have
\begin{equation*}
\mu (\omega _2)-\mu (\omega _3)=\mu (\Omega _3)=\mu (\omega _2)-2\mu (\omega _1)-\mu (\omega _3)
\end{equation*}
which is a contradiction.\hfill\qedsymbol
\end{exam}

\section{2-Generation} % Section 7
This section illustrates by examples that more $q$-measures are 2-generating than 1-generating and more coevents are 2-generated than are 1-generated.

\begin{exam}{9}    % Example 9.
We have seen in Example~7 that only $q$-measures on $\ascript _2$ that satisfy
$\mu (\omega _1),\mu (\omega _2)>0$ and
\begin{equation*}
\mu (\Omega _2)=\max\paren{\mu (\omega _1),\mu (\omega _2)}-\min\paren{\mu (\omega _1),\mu (\omega _2)}
\end{equation*}
1-generate $\phi=\omega _1^*\oplus\omega _2^*$. We now show that if a $q$-measure $\mu$ on $\ascript _2$ satisfies $\mu (\omega _1),\mu (\omega _2)>0$ and
\begin{equation*}
\mu (\Omega _2)\le\max\paren{\mu (\omega _1),\mu (\omega _2)}-\min\paren{\mu (\omega _1),\mu (\omega _2)}
\end{equation*}
then $\mu$ 2-generates $\phi=\omega _1^*\oplus\omega _2^*$. First assume without loss of generality that
$0<\mu (\omega _1)\le\mu (\omega _2)$ and $\mu (\Omega _2)\le\mu (\omega _2)-\mu (\omega _1)$. Let
$f\colon\Omega _2\times\Omega _2\to\real$ be the strictly positive, symmetric function defined by
$f(\omega _i,\omega _i)=\mu (\omega _i)$, $i=1,2$ and
\begin{equation*}
f(\omega _1,\omega _2)=f(\omega _2,\omega _1)=\frac{\mu (\Omega _2)+\mu (\omega _1)+\mu (\omega _2)}{2}
\end{equation*}
We then have
\begin{equation*}
\mu (\omega _1)\le\frac{\mu (\omega _1)+\mu (\omega _2)}{2}
  \le\frac{\mu (\Omega _2)+\mu (\omega _1)+\mu (\omega _2)}{2}\le\mu (\omega _2)
\end{equation*}
Hence, $f(\omega _1,\omega _1)\le f(\omega _1,\omega _2)\le f(\omega _2,\omega _2)$. Define the set function
$\mu '\colon\ascript _2\to\real ^+$ by
\begin{equation*}
\mu '(A)=\int _A\sqbrac{\int _Af(\omega ,\omega ')d\phi (\omega )}d\phi (\omega ')
\end{equation*}
Then
\begin{equation*}
\mu '(\omega _1)=\int _{\brac{\omega _1}}f(\omega _1,\omega ')d\phi (\omega ')=f(\omega _1,\omega _1)
  =\mu (\omega _1)
\end{equation*}
and similarly, $\mu '(\omega _2)=\mu (\omega _2)$. Defining
$g(\omega ')=\int f(\omega ,\omega ')d\phi (\omega )$ we have
\begin{align*}
g(\omega _1)&=\int f(\omega ,\omega _1)d\phi (\omega )=f(\omega _1,\omega _2)-f(\omega _1,\omega _1)
  =f(\omega _1,\omega _2)-\mu (\omega _1)\\
g(\omega _2)&=\int f(\omega ,\omega _2)d\phi (\omega )=f (\omega _2,\omega _2)-f(\omega _1,\omega _2)
  =\mu (\omega _2)-f(\omega _1,\omega _2)
\end{align*}
Since
\begin{equation*}
f(\omega _1,\omega _2)-\mu (\omega _1)-\mu (\omega _2)+f(\omega _1,\omega _2)
  =2f(\omega _1,\omega _2)-\mu (\omega _1)-\mu (\omega _2)=\mu (\Omega _2)\ge 0
\end{equation*}
we have that
\begin{equation*}
\mu '(\Omega _2)=\int g(\omega ')d\phi (\omega ')=\mu (\Omega _2)
\end{equation*}
Hence, $\mu (A)=\mu '(A)$ for all $A\in\ascript _2$ so $\mu$ 2-generates $\phi$.\hfill\qedsymbol
\end{exam}

Just as every coevent in $\ascript _2^*$ is 1-generated, it is not hard to show that every coevent in $\ascript _2^*$ is 2-generated. For the same reason that $\omega _1^*\omega _2^*\omega _3^*$ is not 1-generated in $\ascript _3$ we have that $\omega _1^*\omega _2^*\omega _3^*$ is not 2-generated in $\ascript _3$. In fact, for $m\ge 3$,
$\phi =\omega _1^*\cdots\omega _m^*$ is not $1$ or 2-generated in $\ascript _n$, $n\ge m$. This is because
$\phi (A)=0$ for all $A\in\ascript _n$ such that $\brac{\omega _1,\ldots ,\omega _m}\not\subseteq A$. Hence, if a
$q$-measure $\mu$ $1$ or 2-generates $\phi$, then
\begin{equation*}
\mu (\omega _i)=\mu\paren{\brac{\omega _i,\omega _j}}=0
\end{equation*}
for all $i,j\le n$. However, $\mu\paren{\brac{\omega _1,\ldots ,\omega _m}}>0$ and this contradicts \eqref{eq42}. The same reasoning shows that any coevent whose evaluation map representation has all terms of degree larger than $2$ is not $1$ or 2-generated.

\begin{exam}{10}    % Example 10.
We have seen in Example~8, that
\begin{equation*}
\phi =\omega _1^*\oplus\omega _2^*\oplus\omega _3^*\oplus\omega _1^*\omega _2^*
\end{equation*}
is not 1-generated in $\ascript _3$. We now show that $\phi$ is 2-generated in $\ascript _3$. Place a $q$-measure
$\mu$ on $\ascript _3$ satisfying $0<\mu (\omega _1)\le\mu (\omega _2)$,
$\mu (\omega _3)=\mu (\omega _1)+\mu (\omega _2)$,
$\mu\paren{\brac{\omega _1,\omega _2}}\ge\mu (\omega _3)$,
$\mu\paren{\brac{\omega _1,\omega _3}}=\mu (\omega _2)$,
$\mu\paren{\brac{\omega _2,\omega _3}}=\mu (\omega _1)$ and
$\mu (\Omega _3)=\mu\paren{\brac{\omega _1,\omega _2}}-\mu (\omega _3)$. To show that $\mu$ is indeed a
$q$-measure we have
\begin{equation*}
\sum _{i<j=1}^3\mu\paren{\brac{\omega _i,\omega _j}}-\sum _{i=1}^3\mu (\omega _i)
  =\mu\paren{\brac{\omega _1,\omega _2}}-\mu (\omega _3)=\mu (\Omega _3)
\end{equation*}
Let $f\colon\Omega _3\times\Omega _3\to\real$ be the strictly positive, symmetric function satisfying
$f(\omega _i,\omega _i)=\mu (\omega _i)$, $i=1,2,3$ and
\begin{equation*}
f(\omega _1,\omega _2)=f(\omega _1,\omega _3)=f(\omega _2,\omega _3)=\mu\paren{\brac{\omega _1,\omega _2}}
\end{equation*}
Letting $\mu '(A)=\int _A\sqbrac{\int _Af(\omega ,\omega ')d\phi (\omega )}d\phi (\omega ')$ for all $A\in\ascript _3$ we have $\mu '(\omega _i)=\mu (\omega _i)$, $i=1,2,3$. Moreover,
\begin{equation*}
g_{\brac{\omega _1,\omega _2}}(\omega _1)
  =\int _{\brac{\omega _1,\omega _2}}f(\omega ,\omega _1)d\phi (\omega )
  =f(\omega _1,\omega _2)=\mu\paren{\brac{\omega _1,\omega _2}}
\end{equation*}
Similarly, $g_{\brac{\omega _1,\omega _2}}(\omega _2)=\mu\paren{\brac{\omega _1,\omega _2}}$ and for the other doubleton sets we have
\begin{align*}
g_{\brac{\omega _1,\omega _3}}(\omega _1)&=\mu\paren{\brac{\omega _1,\omega _2}}-\mu (\omega _1)\\
g_{\brac{\omega _1,\omega _3}}(\omega _3)&=\mu\paren{\brac{\omega _1,\omega _2}}-\mu (\omega _3)\\
g_{\brac{\omega _2,\omega _3}}(\omega _2)&=\mu\paren{\brac{\omega _1,\omega _2}}-\mu (\omega _2)\\
g_{\brac{\omega _2,\omega _3}}(\omega _3)&=\mu\paren{\brac{\omega _1,\omega _2}}-\mu (\omega _3)\\
\end{align*}
Hence,
\begin{equation*}
\mu '\paren{\brac{\omega _1,\omega _2}}
  =\int _{\brac{\omega _1,\omega _2}}g_{\brac{\omega _1,\omega _2}}(\omega ')d\phi (\omega ')
  =\mu\paren{\brac{\omega _1,\omega _2}}
\end{equation*}
and similarly, $\mu '\paren{\brac{\omega _1,\omega _3}}=\mu\paren{\brac{\omega _1,\omega _3}}$,
$\mu '\paren{\brac{\omega _2,\omega _3}}=\mu\paren{\brac{\omega _2,\omega _3}}$.\break Finally,
\begin{equation*}
g_\Omega (\omega _1)=\int f(\omega ,\omega _1)d\phi (\omega )
  =f(\omega _1,\omega _3)-f(\omega _1,\omega _2)=0
\end{equation*}
and similarly, $g_\Omega (\omega _2)=0$, $g_\Omega (\omega _3)=\mu (\Omega _3)$. We conclude that
\begin{equation*}
\mu '(\Omega _3)=\int g_\Omega (\omega ')d\phi (\omega ')=\mu (\Omega _3)
\end{equation*}
Hence, $\mu (A)=\mu '(A)$ for all $A\in\ascript _3$ so $\mu$ 2-generates $\phi$.\hfill\qedsymbol
\end{exam}
\medskip

We do not know whether $\omega _1^*\oplus\omega _2^*\oplus\omega _3^*$ is 2-generated in $\ascript _3$.

\end{document}